\newcommand{\tr}{^{\mathsf{T}}}
\newcommand{\Real}{\mathbb{R}}
\newcommand{\Nats}{\mathbb{N}}
\newcommand{\E}{\mathbb{E}}
\newcommand{\opt}{^{\star}}
\newcommand{\probP}{\text{I\kern-0.15em P}}
\DeclareMathOperator*{\argmin}{argmin}
\newcommand{\saddle}{\mathfrak{S}}
\newcommand{\rhog}{\rho_{\mathrm{G}}}
\newcommand{\rhor}{\rho_{\mathrm{R}}}
\renewcommand{\cite}{\citep}
\theoremstyle{plain}
\newtheorem{theorem}{Theorem}[section]
\newtheorem{proposition}[theorem]{Proposition}
\newtheorem{lemma}[theorem]{Lemma}
\theoremstyle{definition}
\theoremstyle{remark}
\newtheorem{remark}[theorem]{Remark}
\newtheorem{example}[theorem]{Example}
\title{Convergence of Fast Policy Iteration in Markov Games and Robust MDPs}
\author{
  Keith Badger\textsuperscript{\rm 1}, 
  Jefferson Huang\textsuperscript{\rm 2},
  Marek Petrik\textsuperscript{\rm 3}
}
\begin{document}

\nocopyright

\maketitle

\begin{abstract}
Markov games and robust MDPs are closely related models that involve computing a pair of saddle point policies. As part of the long-standing effort to develop efficient algorithms for these models, the Filar-Tolwinski (FT) algorithm has shown considerable promise. As our first contribution, we demonstrate that FT may fail to converge to a saddle point and may loop indefinitely, even in small games. This observation contradicts the proof of FT's convergence to a saddle point in the original paper. As our second contribution, we propose Residual Conditioned Policy Iteration (RCPI). RCPI builds on FT, but is guaranteed to converge to a saddle point. Our numerical results show that RCPI outperforms other convergent algorithms by several orders of magnitude.
\end{abstract}

\section{Introduction}

Markov Games~(MG)~\cite{Kallenberg2022} and Robust MDPs~(RMDPs)~\cite{Iyengar2005, Wiesemann2013, Ho2022} are two important models that generalize Markov Decision Processes~(MDPs)~\cite{Puterman2005}. Markov games can model strategic adversaries that can act to minimize the agent's returns and are a common model in multi-agent reinforcement learning~\cite{Shou2022,Littman1994a}. Similarly, RMDPs can model an adversarial nature that can perturb transition probabilities and rewards to minimize the agent's returns and are useful when making decisions with imperfect data-driven models~\cite{Lobo2023, Behzadian2021}. In recent years, MGs and RMDPs have seen an increasing number of applications in machine and reinforcement learning, which has motivated the study of efficient algorithms for solving them~\cite{Perolat2016, Ho2021, Ho2022, Behzadian2021a, Kaufman2013, Winnicki2023}.

Although basic algorithms, like value and policy iteration, adapt readily from MDPs to MGs and RMDPs, developing more efficient algorithms has been challenging. The efforts to adapt efficient optimistic policy iteration~(OPI) algorithms, such as modified or fitted policy iteration, have been difficult. Many natural OPI algorithms proposed for MGs and RMDPs cycle among suboptimal policies, alternatively improving the minimization or maximization sides of the saddle point equilibrium. The lack of convergence is often counter-intuitive and has led to several incorrect convergence proofs in the literature~\cite{Condon1993,Perolat2016,Filar1991}. The overarching reason is that OPI in MG and RMDPs do not monotonically improve the policy and its value function as in MDPs.

We make two main contributions in this paper. First, we show that a fast OPI method proposed in~\citet{Filar1991} can terminate with an arbitrarily suboptimal policy. \citet{Perolat2016} first identified a gap in the proof of correctness in \citet{Filar1991} but hypothesized the algorithm works nevertheless. In contrast, we show that the algorithm is inherently suboptimal. 

Second, we propose and analyze \emph{Residual Conditioned Policy Iteration}~(RCPI). RCPI is a new, simple approximate policy iteration algorithm for solving MGs and RMDPs that is guaranteed to converge to optimal policies. It builds on earlier efficient OPI algorithms~\cite{Filar1991,Perolat2016,Ho2021,Winnicki2023} and combines them with an adaptive correction step. Our theoretical analysis shows that RCPI matches the worst-case computational complexity of value iteration. Our numerical results show that on a wide range of problems, RCPI outperforms other convergent algorithms by several orders of magnitude, even in moderately sized problems. 

In this paper, we restrict our focus to model-based algorithms for MGs and RMDPs. It is important to note that this setting differs from online algorithms for solving games and multi-agent reinforcement learning problems, such as in~\citet{Zhang2022}. Although some of the issues that need to be overcome in online and model-based solvers are similar, we leave the study of the exact relationship between online and model-based algorithms for future work. 

The remainder of the paper is organized as follows. \Cref{sec:prior-work} positions our work in the context of prior algorithmic developments for MGs and RMDPs. Then, \cref{sec:background} describes the formal framework for MGs and RMDPs. \Cref{sec:filar-tolw-suboptimal} describes our first contribution, which is to show that an existing OPI algorithm~\cite{Filar1991} may fail with an arbitrarily suboptimal policy. \Cref{sec:rcpi:-resid-corr} describes our second and main contribution, the RCPI algorithm, along with its convergence rate and computational complexity analysis. Finally, our numerical results in \cref{sec:numerical-results} compare RCPI with existings algorithms for solving MGs and RMDPs. 

% \mm{there is some inconsitency in the terms we are using. The agents are primary, secondary, adversarial, maximizing, minimizing. It may be better to pick some terms and stick to them throughout.}

\section{Prior Work: Solving MGs and RMDPs} \label{sec:prior-work}

In this section, we summarize prior efforts on developing OPI algorithms for MGs and RMDPs. We note that MG and RMDP communities have been largely separate, though the similarities between them have been noted and exploited previously~\cite{Iyengar2005,Grand-Clement2024,Grand-Clement2025}.
 
Value iteration is a simple convergent algorithm for solving MDPs, RMDPs, and MGs, but can be very slow in many practical settings~\cite{Puterman2005}. Many faster convergent algorithms for MDPs exist, such as policy iteration or modified policy iteration. Since the early days of MG~\cite{Condon1993} and RMDPs~\cite{Iyengar2005,Kaufman2013}, researchers have sought to generalize the ideas of modified policy iteration from MDPs to MGs and RMDPs. However, the attempts to speed up policy iteration while guaranteeing convergence have been largely unsuccessful~\cite{Perolat2016}. Existing algorithms are either too slow for larger problems or lack optimality guarantees.

Policy iteration~\cite{Puterman2005}, another basic MDP algorithm, can dramatically reduce the number of Bellman operator evaluations and compute the optimal policy in strongly polynomial time in MDPs~\cite{Ye2011}. Hoffman-Karp algorithm, also known as robust policy iteration~\cite{Iyengar2005}, for MGs and RMDPs adapts policy iteration to MGs and RMDPs. Although Hoffman-Karp has polynomial worst-case time complexity~\cite{Hansen2013}, it can be slower than value iteration in practice. Each Hoffman-Karp policy evaluation requires computing the adversarial agent's optimal policy. That is a significant increase in the complexity of the policy evaluation step in MDPs, which entails solving a system of linear equations.

Optimistic policy iteration~(OPI) methods, such as modified policy iteration, accelerate policy iteration by performing the evaluation step approximately~\cite{Puterman2005}. In MDPs, OPI algorithms dramatically improve empirical performance while preserving the worst-case convergence rate of value iteration. In MGs and RMDPs, many natural OPI algorithms attain good empirical performance but fail to compute optimal policies~\cite{Condon1993}. For instance, Pollatschek Avi-Itzhak~(PAI) algorithm holds the adversarial policy constant in the policy evaluation step, which is quicker than Hoffman-Karp, but may lead to infinitely looping over suboptimal policies~\cite{VanderWal1978}. 

One well-known attempt to fix PAI's non-convergence is the Filar-Tolwinski~(FT) algorithm~\cite{Filar1991}. It leverages the observation that PAI can be seen as Newton's method on the $L_2$ norm of the Bellman residual. FT replaces the pure Newton's method of PAI with the modified Newton's method, which uses Armijo's rule when deciding the step size in the value function update. While~\cite{Filar1991} claims that this resolves the cycling issues found with PAI, we show that FT may not converge. We discuss this issue in more detail in \cref{sec:filar-tolw-suboptimal}.

Recent years have seen several notable attempts to develop algorithms that match the empirical performance of PAI while guaranteeing convergence to an optimal policy. Robust Modified Policy Iteration~(RMPI)~\cite{Kaufman2013} and Partial Policy Iteration~(PPI)~\cite{Ho2021} modify Hoffman-Karp to evaluate the adversarial policy approximately. RMPI uses a fixed-precision approximation, while PPI adapts the evaluation throughout the algorithm's execution. Numerical evidence suggests that PPI outperforms RMPI~\cite{Ho2021}. The Winnicki-Srikant~(WS) algorithm combines value iteration steps with policy backup steps and proposes ratios that guarantee the algorithm's convergence~\cite{Winnicki2023}. 

\section{Preliminaries: MGs and Robust MDPs}
\label{sec:background}

In this section, we define Markov games and robust Markov Decision Processes formally and describe the properties we use to derive our main results.

\subsection{Notation} The symbols $\Real$ and $\Nats$ denote the sets of real and natural (including $0$) numbers.
Vectors are denoted with a lower-case bold font, such as $\bm{x} \in \Real^n$, and $x_i, i = 1, \dots , n$ is the $i$-th element of the vector. Matrices are denoted in uppercase bold font. Sets are denoted with calligraphic letters. We use the notation $\Real^{\mathcal{Z}}$ to denote the set of all functions $f\colon \Real \mapsto \mathcal{Z}$, and interpret each $f$ equivalently as a vector $\bm{f}$ such that $f_z = f(z)$. The notation $\Delta^\mathcal{Z} := \left\{ \bm{x} \in\Real^\mathcal{Z} \mid  \bm{1}\tr \bm{x} = 1, \bm{x} \geq \bm{0} \right\}$ refers to the set of probability distributions over the finite non-empty set $\mathcal{Z}$.

To streamline our notation, we define the \emph{$\epsilon$-saddle-point operator} $\saddle_{\epsilon}\colon \Real^{\mathcal{X} \times \mathcal{Y}} \to 2^{\mathcal{X} \times \mathcal{Y}}$ for any tolerance $\epsilon \ge 0$ and an objective function $f\colon \mathcal{X} \times  \mathcal{Y} \to \Real $ as
\begin{equation} \label{eq:saddle-point}
  \begin{gathered}
  \saddle_{\epsilon}(f)
  \;:=\; 
  \Bigl\{  (x\opt,y\opt) \in \mathcal{X} \times \mathcal{Y} \mid  \\
  f(x,y\opt) - \epsilon  \leq f(x\opt,y\opt) \leq f(x\opt,y) + \epsilon, \\
  \; \forall x \in \mathcal{X}, y\in \mathcal{Y} \Bigr\},
  \end{gathered}
\end{equation}
where $\mathcal{X}, \mathcal{Y}$ are arbitrary sets. Note that the first parameter of $f$ is maximized, and the second one is minimized. The intuitive explanation of this definition is that $x\opt$ and $y\opt$ are $\epsilon$-optimal responses to each other. In the remainder of the paper, we shorten $\saddle := \saddle_0$. Note that if $\epsilon_1 \le \epsilon_2$ then $\saddle_{\epsilon_1}(f) \subseteq \saddle_{\epsilon_2}(f)$. We also allow $\saddle_0$ to be used with objective function $f\colon \mathcal{X} \times \mathcal{Y} \to \mathcal{Z}$ for some partially ordered set $\mathcal{Z}$. 

If the function $f$ is real-valued and bi-linear, then an element of $\saddle(f)$ can be computed using the standard linear problem formulation of matrix games, see for example~\cite[section~10.1.3]{Kallenberg2022}.

\subsection{Markov Games}
Markov games extend Markov decision processes to a zero-sum game-theoretic setting~\cite{Kallenberg2022,Filar1996} and can model multi-agent reinforcement learning. An imperfect-information \emph{Markov game} is defined as $(\mathcal{S}, \mathcal{A}, \mathcal{B}, r, P, s_0)$ where $\mathcal{S} = \left\{ 1, \dots , S \right\}$ is the finite non-empty set of states that the agents share, $\mathcal{A} = \left\{ 1, \dots , A \right\}$ is the finite non-empty set of actions for the primary agent, $\mathcal{B} = \left\{ 1, \dots , B \right\}$ is the finite non-empty set of actions for the adversarial agent. The function $r\colon  \mathcal{S} \times \mathcal{A} \times \mathcal{B}  \to [-r_{\max},r_{\max}]$ for $r_{\max} \in \Real$ represents the rewards the primary agent seeks to maximize and the adversarial agent seeks to minimize. The function $P\colon  \mathcal{S} \times \mathcal{A} \times \mathcal{B} \to \Delta^{\mathcal{S}}$ is the transition probability function, where $p(s, a, b, s')$ is the probability of transitioning from state $s$ to state $s'$ after the agents take actions action $a$ and $b$, respectively. Finally, $s_0 \in \mathcal{S}$ is the initial state.

We consider \emph{infinite-horizon discounted rewards} for a discount factor $\gamma \in (0,1)$, and restrict attention to randomized stationary policies $\Pi := (\Delta^{\mathcal{A}})^{\mathcal{S}}$ and $\Sigma := (\Delta^{\mathcal{B}})^{\mathcal{S}}$ for the maximizing and minimizing agents, respectively. Note that the restriction to randomized stationary policies is not limiting, because neither of the players can benefit from using Markov or history-dependent policies~\cite{Kallenberg2022, Filar1996}.
The value function $\bm{v}^{\bm{\pi}, \bm{\sigma}} \in \Real^S$ associated with each $\bm{\pi}\in \Pi$ and $\bm{\sigma}\in \Sigma$ as~\cite{Filar1996}:
\begin{equation} \label{eq:value-policy}
v^{\bm{\pi},\bm{\sigma}}_s
  \; :=\; 
  \E_{\bm{\pi}, \bm{\sigma}}^s
  \left[ \sum_{t=0}^{\infty} \gamma^t r(\tilde{s}_t, \tilde{a}_t, \tilde{b}_t)\right],
  \quad
  \forall s\in \mathcal{S}.
\end{equation}
The superscripts and subscripts of $\E^s_{\bm{\pi}, \bm{\sigma}}$ indicate that the probability measure is chosen such that $\tilde{s}_0 = s$ and that $\tilde{a}_t \sim \bm{\pi}(\tilde{s}_t)$, $\tilde{b}_t \sim \bm{\sigma}(\tilde{s}_t)$, and $\tilde{s}_{t+1} \sim P(\tilde{s}_t, \tilde{a}_t, \tilde{b}_t, \cdot)$ for all $t\in \Nats$. In general, we adorn random variables with a tilde. The equilibrium value function $\bm{v\opt} \in \Real^S$ is defined as the saddle point over policy pairs:
\begin{equation} \label{eq:value-optimal}
  v\opt_s
  \; :=\;
  \max_{\bm{\pi}\in \Pi } \min_{\bm{\sigma} \in \Sigma } \, v^{\bm{\pi}, \bm{\sigma} }_s, \qquad \forall s\in \mathcal{S}.
\end{equation}

 That is, the agents seek to compute the saddle point of the infinite-horizon discounted objective function $\rhog \colon \mathcal{S} \times \Pi \times  \Sigma \to \Real$ for some tolerance $\epsilon \ge 0$: 
\begin{equation} \label{eq:objective-function}
  (\bm{\pi\opt}, \bm{\sigma\opt}) \in \saddle_{\epsilon}(\rho),
  \; \text{where} \;
  \rhog(s_0, \bm{\pi}, \bm{\sigma})
  \; :=\; 
  v_{s_0}^{\bm{\pi}, \bm{\sigma}}.
  %\lim_{T\to\infty}
  % \E_{\bm{\pi}, \bm{\sigma}}^{s_0} \left[ \sum_{t=0}^{\infty} \gamma^t r(\tilde{s}_t, \tilde{a}_t, \tilde{b}_t)\right].
\end{equation}
Given any $\epsilon \geq 0$, the existence of an equilibrium pair $(\bm{\pi\opt} , \bm{\sigma\opt})$ is guaranteed for discounted Markov games with finite state and action sets~\cite[corollary 10.1]{Kallenberg2022}.

Next, we describe the \emph{Bellman operator} for Markov games. For each $\bm{\pi}\in \Pi$ and $\bm{\sigma}\in \Sigma$, we define the reward vector $\bm{r}^{\bm{\pi},\bm{\sigma}}\in \Real^S$ and a transition matrix $\bm{P}^{\bm{\pi},\bm{\sigma}} \in \Real_{+}^{S \times  S}$ as
 \begin{align*}
   \bm{r}^{\bm{\pi},\bm{\sigma}}_s
   &:= \smashoperator[r]{\sum_{(a,b) \in \mathcal{A} \times  \mathcal{B}}} \pi_a(s) \cdot \sigma_b(s) \cdot r(s,a,b), \\
   \bm{P}^{\bm{\pi},\bm{\sigma}}_{s,s'}
   &:= \smashoperator[r]{\sum_{(a,b) \in \mathcal{A} \times  \mathcal{B}}} \pi_a(s)\cdot   \sigma_b(s)\cdot   p(s,a,b,s').
 \end{align*}
Then, the Bellman evaluation operator $\mathfrak{T}^{\bm{\pi}, \bm{\sigma} } \colon \Real^S \to  \Real^S$ is defined for each $\bm{v}\in \Real^n$ and $s\in \mathcal{S}$ as
\begin{align} \label{eq:Bellman-Game}
\mathfrak{T}^{\bm{\pi}, \bm{\sigma}}_s \bm{v}
\; :=\; 
 r_s^{\bm{\pi}, \bm{\sigma}} + \gamma \cdot  \bm{P}^{\bm{\pi}, \bm{\sigma}}_s \bm{v}  ~.
\end{align}
For all operators, we use the shorthand $\mathfrak{T} \bm{v}_s := (\mathfrak{T} \bm{v})_s$. The Bellman equilibrium operator $\mathfrak{T}\opt \colon \Real^S \to \Real^S$ is defined as $\mathfrak{T}\opt_s \bm{v} := \max_{\bm{\pi} \in \Pi} \min_{\sigma \in \Sigma} \mathfrak{T}_s^{\bm{\pi}, \bm{\sigma}} \bm{v}$. The Bellman policy operator $\mathfrak{B}\opt\colon \Real^S \to  2^{\Pi \times \Sigma}$ computes the saddle point policies and is defined as
\begin{equation} \label{eq:bellman-policy} 
\mathfrak{B}\opt \bm{v}
  \; :=\; 
  \saddle( (\bm{\pi}, \bm{\sigma}) \mapsto \mathfrak{T}^{\bm{\pi}, \bm{\sigma}} \bm{v}),
  % \qquad
  % \mathfrak{B}\opt \bm{v} := \bigtimes_{s\in \mathcal{S}} \mathfrak{B}\opt_s \bm{v}.
\end{equation}
where the partial order on the value functions is defined as $\bm{u} \le \bm{v} \Leftrightarrow u_s \le v_s, \forall s\in \mathcal{S}$.

Bellman operators can be used to compute both $\bm{v}^{\bm{\pi}, \bm{\sigma}}$ for any $(\bm{\pi}, \bm{\sigma}) \in \Pi \times \Sigma$, as well as $\bm{v}^\star$. These value functions defined in~\eqref{eq:value-policy} and~\eqref{eq:value-optimal} are the \emph{unique} solutions for each $\bm{\pi}\in \Pi$ and $\bm{\sigma} \in \Sigma $ to, respectively~\cite[corollary~10.1]{Kallenberg2022},
\begin{equation*}
\bm{v}^{\bm{\pi}, \bm{\sigma}} = \mathfrak{T}^{\bm{\pi}, \bm{\sigma}} \bm{v}^{\bm{\pi}, \bm{\sigma}},
  \qquad
  \bm{v\opt} = \mathfrak{T}\opt  \bm{v\opt}.
\end{equation*}
The Bellman operators $\mathfrak{T}^{\bm{\pi}, \bm{\sigma}}$ and $\mathfrak{T}\opt$ are monotone and $\gamma$-contractive in the $L_{\infty}$ norm~\cite[theorem~10.5]{Kallenberg2022}. Because solutions to saddle points can be computed only approximately in polynomial time, we also define \emph{approximate Bellman equilibrium operator} $\mathfrak{T}^{\delta}\colon \Real^S \to \Real^S$ which satisfies that
\begin{equation} \label{eq:bellman-delta}
  \| \mathfrak{T}^{\delta} \bm{v} - \mathfrak{T}\opt \bm{v} \|_{\infty} \le \delta, \qquad
  \forall \bm{v}\in \Real^S,
\end{equation}
and the \emph{approximate Bellman policy operator} $\mathfrak{B}^\delta \colon \Real^S \to 2^{\Pi \times \Sigma}$ which satisfies
\[
\mathfrak{B}^\delta \bm{v} \subseteq\saddle_\delta( (\bm{\pi}, \bm{\sigma}) \mapsto \mathfrak{T}^{\bm{\pi}, \bm{\sigma}} \bm{v}).
\]
The well-known \emph{value iteration} is the simplest method for computing $\bm{v}\opt$ iteratively as $\bm{v}^{k+1} = \mathfrak{T}\opt \bm{v}^k$, where it is well-known that $\lim_{k\to \infty} \bm{v}^k = \bm{v}\opt$. It's worth noting that $\mathfrak{T}\opt$ is typically replaced with $\mathfrak{T}^\delta$ which has similar convergence properties.

% \begin{align*}
%     \mathfrak{T}^\delta\bm{v} = \left\{ \bm{x} : \|\bm{x} - \mathfrak{T}\opt\bm{v} \|_\infty \leq \delta \right\}.
% \end{align*}

% \kb{Operators are not monotone. Consider $\bm{r^{\pi,\sigma}} = \begin{bmatrix}
%     1 \\
%     -1
% \end{bmatrix}$ and $\bm{P^{\pi,\sigma}} = \begin{bmatrix}
%     0 & 1 \\
%     1 & 0
% \end{bmatrix}$. marek: that example does not violate monotonicity: $a \ge  b \implies  T a \ge  T b$ }

Computing the exact equilibrium is often unnecessary. Instead, it may be sufficient to compute an $\epsilon$-equilibrium for a sufficiently small $\epsilon$. To evaluate how close the value function is to the equilibrium, it is convenient to define the \emph{Bellman residual} $\psi_p\colon \Real^S\to \Real$ as
\begin{equation*}
\psi_p(\bm{v})
\; :=\;
\| \mathfrak{T}\opt \bm{v} - \bm{v} \|_p,
\qquad
p \in \left\{ 1, 2, \infty  \right\},
\end{equation*}
and the \emph{approximate Bellman residual} as
\begin{equation*}
    \psi^\delta_p(\bm{v}) \; := \; \| \mathfrak{T}^\delta \bm{v} - \bm{v} \|_p.
\end{equation*}

The following proposition shows that we can obtain $\epsilon$-equilibrium policies from a value function that approximates the equilibrium value function.
\begin{proposition} \label{prop:value-approximation-error}
For each $\bm{v} \in \Real^S$:
\[
  \emptyset \neq 
  \mathfrak{B}\opt \bm{v}
  \; \subseteq\;  \saddle_{\epsilon}(\rhog),
  \quad \text{where} \quad
 \epsilon = \frac{2 \gamma}{1-\gamma} \psi_{\infty}(\bm{v}).
\]
\end{proposition}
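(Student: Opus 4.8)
The plan is to establish the two assertions separately: first that $\mathfrak{B}\opt \bm{v}$ is non-empty, and then that every pair it contains lies in $\saddle_{\epsilon}(\rhog)$ for the stated $\epsilon$. For non-emptiness I would exploit that each component $\mathfrak{T}^{\bm{\pi},\bm{\sigma}}_s \bm{v}$ of the one-step lookahead depends only on the action distributions $\bm{\pi}(s)\in\Delta^{\mathcal{A}}$ and $\bm{\sigma}(s)\in\Delta^{\mathcal{B}}$ and is bilinear in that pair. Since the partial order on $\Real^S$ used in \eqref{eq:bellman-policy} is component-wise, the vector saddle point defining $\mathfrak{B}\opt\bm{v}$ decouples across states: it suffices to solve, for each $s$, the finite matrix game with payoff matrix $M^s_{ab} := r(s,a,b) + \gamma\sum_{s'}p(s,a,b,s')v_{s'}$. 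The minimax theorem (equivalently, the matrix-game LP noted after \eqref{eq:saddle-point}) supplies a saddle point $(\bm{\pi}\opt(s),\bm{\sigma}\opt(s))$ of each game, and assembling these state-wise produces a member of $\mathfrak{B}\opt\bm{v}$.

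For the approximation bound, fix $(\bm{\pi},\bm{\sigma})\in\mathfrak{B}\opt\bm{v}$ and write $\bm{u} := \mathfrak{T}\opt\bm{v} = \mathfrak{T}^{\bm{\pi},\bm{\sigma}}\bm{v}$ and $r := \psi_\infty(\bm{v}) = \|\bm{u}-\bm{v}\|_\infty$. The saddle property gives the one-step inequalities $\mathfrak{T}^{\bm{\pi}',\bm{\sigma}}\bm{v}\le \bm{u}\le \mathfrak{T}^{\bm{\pi},\bm{\sigma}'}\bm{v}$ for all deviations $\bm{\pi}',\bm{\sigma}'$. The crucial point, which yields the sharp constant $2\gamma/(1-\gamma)$ rather than the naive $2/(1-\gamma)$, is to anchor every value function at $\bm{u}$ instead of at $\bm{v}$. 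Using the resolvent identity $\bm{v}^{\bm{\pi}',\bm{\sigma}} - \bm{v} = (\bm{I}-\gamma\bm{P}^{\bm{\pi}',\bm{\sigma}})^{-1}(\mathfrak{T}^{\bm{\pi}',\bm{\sigma}}\bm{v}-\bm{v})$, the estimate $\mathfrak{T}^{\bm{\pi}',\bm{\sigma}}\bm{v}-\bm{v}\le r\bm{1}$, and the nonnegativity of the resolvent (whose row sums are $1/(1-\gamma)$), I first obtain the coarse bound $\|\bm{v}^{\bm{\pi}',\bm{\sigma}}-\bm{v}\|_\infty\le r/(1-\gamma)$. Feeding this back through the contraction step $\bm{v}^{\bm{\pi}',\bm{\sigma}}-\bm{u}\le \gamma\bm{P}^{\bm{\pi}',\bm{\sigma}}(\bm{v}^{\bm{\pi}',\bm{\sigma}}-\bm{v})$ (valid since $\bm{u}\ge\mathfrak{T}^{\bm{\pi}',\bm{\sigma}}\bm{v}$ and $\bm{P}^{\bm{\pi}',\bm{\sigma}}\bm{1}=\bm{1}$) shrinks the error by an extra factor $\gamma$, giving $\bm{v}^{\bm{\pi}',\bm{\sigma}}\le \bm{u}+\tfrac{\gamma r}{1-\gamma}\bm{1}$. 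At the central pair the same contraction step gives $\|\bm{v}^{\bm{\pi},\bm{\sigma}}-\bm{u}\|_\infty = \|\gamma\bm{P}^{\bm{\pi},\bm{\sigma}}(\bm{v}^{\bm{\pi},\bm{\sigma}}-\bm{v})\|_\infty\le \tfrac{\gamma r}{1-\gamma}$, so in particular $\bm{v}^{\bm{\pi},\bm{\sigma}}\ge\bm{u}-\tfrac{\gamma r}{1-\gamma}\bm{1}$. Subtracting yields $\bm{v}^{\bm{\pi}',\bm{\sigma}}-\bm{v}^{\bm{\pi},\bm{\sigma}}\le\tfrac{2\gamma r}{1-\gamma}\bm{1}$; evaluating at $s_0$ is exactly the first inequality defining membership in $\saddle_\epsilon(\rhog)$.

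The second saddle inequality follows by the mirror-image argument with the players' roles exchanged: a minimizer deviation obeys $\mathfrak{T}^{\bm{\pi},\bm{\sigma}'}\bm{v}\ge\bm{u}$, giving $\bm{v}^{\bm{\pi},\bm{\sigma}'}\ge\bm{u}-\tfrac{\gamma r}{1-\gamma}\bm{1}$, while the central-pair bound gives $\bm{v}^{\bm{\pi},\bm{\sigma}}\le\bm{u}+\tfrac{\gamma r}{1-\gamma}\bm{1}$, so that $\bm{v}^{\bm{\pi},\bm{\sigma}}-\bm{v}^{\bm{\pi},\bm{\sigma}'}\le\tfrac{2\gamma r}{1-\gamma}\bm{1}$. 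I expect the main obstacle to be precisely the bookkeeping that produces this extra factor of $\gamma$: one must resist comparing $\bm{v}^{\bm{\pi}',\bm{\sigma}}$ and $\bm{v}^{\bm{\pi},\bm{\sigma}}$ against $\bm{v}$ directly (which only gives the looser $2/(1-\gamma)$) and instead route both through $\bm{u}=\mathfrak{T}\opt\bm{v}$, invoking monotonicity, the $\gamma$-contraction of $\mathfrak{T}^{\bm{\pi},\bm{\sigma}}$, and $\bm{P}^{\bm{\pi},\bm{\sigma}}\bm{1}=\bm{1}$ at each step. Once the anchoring is set up correctly, the remaining manipulations are routine.
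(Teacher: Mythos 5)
Your proof is correct and follows essentially the same route as the paper's: both arguments anchor the comparison at $\bm{u}=\mathfrak{T}\opt\bm{v}=\mathfrak{T}^{\bm{\pi},\bm{\sigma}}\bm{v}$ and split the deviation gap via the triangle inequality into two legs, each bounded by $\tfrac{\gamma}{1-\gamma}\psi_\infty(\bm{v})$ (the paper invokes the abstract contraction fixed-point bound for the best-response operators $\mathfrak{T}^{\star,\bm{\sigma}}$ and $\mathfrak{T}^{\bm{\pi},\star}$, while you re-derive the same estimate per deviation through the resolvent identity). One small imprecision: for an arbitrary deviation $\bm{\pi}'$ the two-sided bound $\|\bm{v}^{\bm{\pi}',\bm{\sigma}}-\bm{v}\|_\infty\le r/(1-\gamma)$ does not follow from the one-sided estimate $\mathfrak{T}^{\bm{\pi}',\bm{\sigma}}\bm{v}-\bm{v}\le r\bm{1}$ (and is false in general, e.g.\ for a deviation to very low-reward actions), but your subsequent steps only use the valid one-sided conclusion $\bm{v}^{\bm{\pi}',\bm{\sigma}}-\bm{v}\le\tfrac{r}{1-\gamma}\bm{1}$, so the argument stands.
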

The proof, which we include in the appendix for the sake of completeness, follows standard arguments; see, for example,~\cite[theorem~10.11]{Kallenberg2022}. We note that the bound in \cref{prop:value-approximation-error} is tighter than the bounds given, for example, in~\citet[corollary~A.4]{Ho2021} and~\citet[theorems 3.1, 3.2]{Williams1993a}.

\subsection{Robust MDPs}
RMDPs generalize MDPs to allow for adversarial perturbations to the transition probabilities. We consider s-rectangular RMDPs $(\mathcal{S}, \mathcal{A}, r, \mathcal{P}, s_0)$ where $\mathcal{S}$ and $\mathcal{A}$ are the finite non-empty sets of states and actions, respectively~\cite{Wiesemann2013,Ho2021}, and $r\colon  \mathcal{S} \times \mathcal{A}  \to [-r_{\max},r_{\max}]$ is the reward function. The ambiguity set $\mathcal{P} := (\mathcal{P}_s)_{s\in \mathcal{S}}$, where $\mathcal{P}_s \subseteq \Delta^\mathcal{S}$ is compact and non-empty for each $s\in \mathcal{S}$, and determines the range of possible adversarial transition probability functions. Finally, the initial state is $s_0$. 

It is common to define the ambiguity sets in RMDPs as bounded norm-balls around a given nominal transition function $\bm{\bar{p}}\colon \mathcal{S} \times  \mathcal{A} \to  \Delta^S$, such as~\cite{Ho2021,Ho2022,Behzadian2021,Behzadian2021a} 
\begin{equation*}
\mathcal{P}_s
:=
\left\{ \bm{p} \in (\Delta^{\mathcal{S}})^{\mathcal{A}}  \mid 
\sum_{a\in\mathcal{A}}  \| \bm{p}(a) - \bm{\bar{p}}(s,a) \| \leq \xi_s \right\}, 
\end{equation*}
for some norm $\| \cdot  \|$ and $\xi_s \ge 0,\, s\in \mathcal{S}$. In this work, we focus on ambiguity sets defined by the $L_1$-norm.

As with Markov games as defined above, we seek to compute a stationary policy $\pi\in \Pi$ that maximizes the expected $\gamma$-discounted infinite-horizon robust return: 
\begin{equation} \label{eq:robust-objective}
\begin{gathered}
\max_{\bm{\pi} \in \Pi} \min_{\bm{p} \in \mathcal{P}} \,
\rhor(\bm{\pi}, \bm{p}),\\
\rhor(\bm{\pi}, \bm{p}) :=
\E_{\bm{\pi}, \bm{p}}^{s_0} \left[ \sum_{t=0}^{\infty} \gamma^t r(\tilde{s}_t, \tilde{a}_t)\right],
\end{gathered}
\end{equation}
where $\gamma \in (0,1)$. We emphasize that for each $\bm{\pi} \in \Pi$, the domain of $\rhor(\bm{\pi}, \cdot)$ is the set of feasible transition probabilities $\mathcal{P}$, rather than the set of all transition probabilities. An optimal policy $\bm{\pi\opt}$ in~\eqref{eq:robust-objective} exists and can be computed by robust value or policy iteration~\cite{Wiesemann2013,Iyengar2005}. 

The following proposition shows that the concept of approximate optimality for RMDPs is closely related to the concept of approximate saddle points in games.
\begin{proposition} \label{prop:rmdp-bound}
  Suppose that $(\bm{\hat{\pi}}, \bm{\hat{p}}) \in \saddle_{\epsilon}(\rhor)$ for some $\epsilon \ge 0$. Then $\bm{\hat{\pi}}$ is \emph{$2\epsilon$-robust optimal} in the sense that
  \[
    \min_{\bm{p}\in \mathcal{P}}  \rhor(\bm{\hat{\pi}}, \bm{p})
    \; \ge\;  
   \min_{\bm{p}\in \mathcal{P}}  \rhor(\bm{\pi\opt}, \bm{p}) - 2\cdot \epsilon. 
 \]
\end{proposition}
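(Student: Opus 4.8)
The plan is to expand the definition of the $\epsilon$-saddle point in~\eqref{eq:saddle-point} for the objective $f = \rhor$, recalling that the first argument (the policy) is maximized and the second (the transition function) is minimized. Thus $(\bm{\hat{\pi}}, \bm{\hat{p}}) \in \saddle_{\epsilon}(\rhor)$ unpacks into the two families of inequalities
\begin{equation*}
  \rhor(\bm{\pi}, \bm{\hat{p}}) - \epsilon \;\le\; \rhor(\bm{\hat{\pi}}, \bm{\hat{p}}) \;\le\; \rhor(\bm{\hat{\pi}}, \bm{p}) + \epsilon,
  \qquad \forall \bm{\pi}\in\Pi,\ \bm{p}\in\mathcal{P}.
\end{equation*}
The target inequality compares $\min_{\bm{p}\in\mathcal{P}}\rhor(\bm{\hat{\pi}}, \bm{p})$ with $\min_{\bm{p}\in\mathcal{P}}\rhor(\bm{\pi\opt},\bm{p})$, so the entire argument amounts to chaining these two inequalities together with the definition of the minimum over the ambiguity set $\mathcal{P}$.

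First I would use the right-hand inequality: since it holds for every $\bm{p}\in\mathcal{P}$, minimizing its right-hand side over $\bm{p}$ yields $\rhor(\bm{\hat{\pi}}, \bm{\hat{p}}) \le \min_{\bm{p}\in\mathcal{P}}\rhor(\bm{\hat{\pi}}, \bm{p}) + \epsilon$, equivalently $\min_{\bm{p}\in\mathcal{P}}\rhor(\bm{\hat{\pi}}, \bm{p}) \ge \rhor(\bm{\hat{\pi}}, \bm{\hat{p}}) - \epsilon$. Next I would instantiate the left-hand inequality at the optimal policy $\bm{\pi} = \bm{\pi\opt}$, which gives $\rhor(\bm{\hat{\pi}}, \bm{\hat{p}}) \ge \rhor(\bm{\pi\opt}, \bm{\hat{p}}) - \epsilon$. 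Finally, since $\bm{\hat{p}} \in \mathcal{P}$ is itself a feasible transition function, it can only overestimate the adversarial minimum, so $\rhor(\bm{\pi\opt}, \bm{\hat{p}}) \ge \min_{\bm{p}\in\mathcal{P}}\rhor(\bm{\pi\opt}, \bm{p})$. Combining the three steps gives
\begin{equation*}
  \min_{\bm{p}\in\mathcal{P}}\rhor(\bm{\hat{\pi}}, \bm{p})
  \;\ge\; \rhor(\bm{\hat{\pi}}, \bm{\hat{p}}) - \epsilon
  \;\ge\; \rhor(\bm{\pi\opt}, \bm{\hat{p}}) - 2\epsilon
  \;\ge\; \min_{\bm{p}\in\mathcal{P}}\rhor(\bm{\pi\opt}, \bm{p}) - 2\epsilon,
\end{equation*}
which is exactly the claimed $2\epsilon$-robust optimality.

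There is no genuine analytic difficulty here: the proof is a short chain of three inequalities, and the factor $2\epsilon$ arises simply because the $\epsilon$ slack of the saddle point is spent once on each of its two sides. The only point requiring care is the bookkeeping of which argument of $\rhor$ is maximized and which is minimized, together with the observation that $\bm{\hat{p}}$ lies in $\mathcal{P}$ and therefore upper-bounds the adversarial minimum evaluated at $\bm{\pi\opt}$. Getting this orientation right is what guarantees that all three inequalities point in the same direction and compose to the desired bound.
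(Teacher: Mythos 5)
Your proposal is correct and is essentially identical to the paper's proof: the same three-inequality chain, with the first two steps coming from the two sides of the $\epsilon$-saddle-point definition and the last from the feasibility $\bm{\hat{p}} \in \mathcal{P}$. You merely spell out the intermediate bookkeeping more explicitly than the paper does.
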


For RMDPs, value functions and Bellman operators are defined analogously to how they are defined for MGs. For more detail, please see the appendix. In the remainder of the paper, we describe the algorithms for MGs which generalize to RMDPs. 

Computationally, the main difference between RMDPs and MGs is in computing the Bellman operator. For most common ambiguity sets $\mathcal{P}$, the robust Bellman operator can be implemented by solving a convex optimization problem~\cite{Wiesemann2013}. For the $L_1$-bound ambiguity sets, the robust Bellman operator can be implemented by solving a linear program~\cite[appendix~C]{Ho2021}. Significantly more efficient methods exist for ambiguity sets bounded by norms and $\varphi$-divergences~\cite{Ho2021,Ho2022,Behzadian2021a}. 

\section{Filar-Tolwinski Algorithm\\May Not Converge} \label{sec:filar-tolw-suboptimal}

\citet{PAI1969} proposed one of the first alternatives to value iteration~\cite{Shapley1953} for solving MGs. This algorithm, which we refer to as the PAI algorithm, can be viewed as applying Newton's method to the problem of finding a zero of $\psi_2(\bm{v})^2$. While PAI is known to converge to the optimal value function $\bm{v}\opt$ under certain restrictive conditions~\cite[theorem~5]{PAI1969}, it is also known to not converge at all for certain MGs~\cite{VanderWal1978}. Filar and Tolwinski~\cite{Filar1991} proposed a modified Newton method intended to fix this convergence issue. In this section, we provide a counterexample to \citet[theorem~3.3]{Filar1991}, where it is claimed that the modified Newton method converges from some constant initial vector to $\bm{v}\opt$. The Filar-Tolwinski (FT) algorithm is described in \cref{alg:FilarTolwinski}.

\begin{algorithm}
    \caption{Filar-Tolwinski (FT) Algorithm}
    \label{alg:FilarTolwinski}
    \KwIn{Initial value $\bm{v}^0$, tolerance $\epsilon$, backtracking line search coefficients $\beta\in (0,1), \delta \in (0,1)$}
    \KwOut{$(\bm{\pi}, \bm{\sigma}) \in \saddle_{\epsilon}(\rhog)$}
    $k \gets 0$\;
    \Repeat{$\frac{2 \gamma}{1-\gamma} \cdot \psi_\infty(\bm{v}^k) \le \epsilon$}
    {
        $k \gets k + 1$\;
        Select $(\bm{\pi}^k,\bm{\sigma}^k)\in \mathfrak{B}\opt \bm{v}^{k-1}$\;
        $\bm{d}^k \gets (\bm{I} - \gamma\bm{P}^{\bm{\pi}^k,\bm{\sigma}^k})^{-1}\bm{r}^{\bm{\pi}^k,\bm{\sigma}^k} - \bm{v}^{k-1}$\;
        \tcp{Line search, Armijo's rule:}
        \tcp{$\nabla \psi_2(\bm{v})^2 =  2 (\gamma\bm{P}^{\bm{\pi}^k,\bm{\sigma}^k} - \bm{I})\tr(\mathfrak{T}\opt\bm{v} - \bm{v})$}
        \label{ln:ft-backtrack}
        $i_k \gets \min \{ i \in \mathbb{N} \mid  \psi_2(\bm{v}^{k-1} + \beta^i\bm{d}^k)^2 \le$ \\
          $\qquad \le  \psi_2(\bm{v}^{k-1})^2 + \delta \beta^i \cdot (\bm{d}^k)\tr \nabla \psi_2(\bm{v}^{k-1})^2 \} $ \;
        \label{ln:ft-stepsize}
        $\bm{v}^k \gets \bm{v}^{k-1} + \beta^{i_k} \cdot \bm{d}^k $\;
    }
    \KwRet $(\bm{v}^k, \bm{\pi}^k, \bm{\sigma}^k)$\;
\end{algorithm}

To derive the FT algorithm, one interprets PAI as the pure Newton's method for solving $\min_{\bm{v}\in \Real^S}  \psi_2(\bm{v})^2$~\cite{Filar1991,Filar1996}. Recall that the pure Newton's method direction $\bm{d}^k$ in iteration $k\in \Nats$ is
\begin{align*}
  \bm{d}^k &:= -(\nabla^2 \psi_2(\bm{v}^{k-1})^2)^{-1} \nabla \psi_2(\bm{v}^{k-1})^2 \\
  &= \left(\bm{I} - \gamma  \bm{P}^{\bm{\pi}^k,\bm{\sigma}^k} \right)^{-1} \left(\mathfrak{T}\opt\bm{v}^{k-1} - \bm{v}^{k-1} \right).
\end{align*}

FT's insight is to replace the pure Newton's step size of $1$ in PAI with a backtracking line search. Setting the step size in Line~\ref{ln:ft-stepsize} in \cref{alg:FilarTolwinski} to $i_k = 0$  recovers PAI exactly. The use of Armijo's rule in determining FT ensures that the objective function $\psi_2(\bm{v})^2$ decreases in every step. Since $\psi_2(\bm{v}\opt) = 0$ is the unique global minimum of $\bm{v} \mapsto \psi_2(\bm{v})^2$ and each of FT's iterations decreases the objective function, FT cannot cycle and does not terminate until reaching the optimal value function $\bm{v}\opt$.

Theorem~3.3 in~\cite{Filar1991} states that \cref{alg:FilarTolwinski} is guaranteed to converge to the optimal value function. However, there is a gap in the proof. In particular, while each step of the iteration reduces the value function, it is not guaranteed that a step size satisfying Armijo's rule exists. Since the gradient of $\bm{v} \mapsto \psi_2(\bm{v})^2$ may be discontinuous, it is possible that no $i$ in Line~\ref{ln:ft-backtrack} in \cref{alg:FilarTolwinski} satisfies the inequality; leading to an infinite loop in the search for the step size. We construct a simple MDP example demonstrating this behavior to show that this can happen.

\begin{figure}
    \centering
    \includegraphics[width=\linewidth]{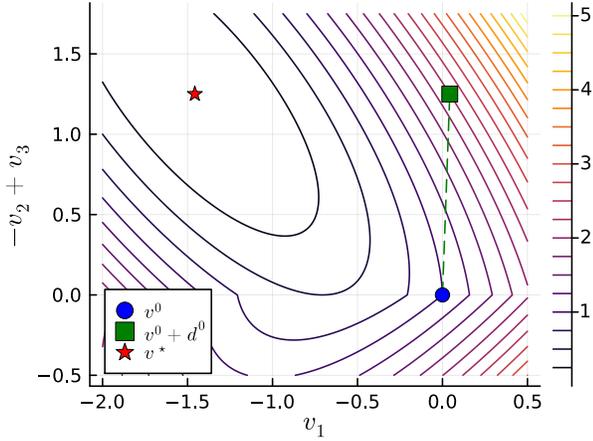}
    \caption{Plot of $\psi_2(\bm{v})^2$ projected onto the plane that spans the initial value function, optimal value function, and the step direction.}
    \label{fig:counter_plot}
\end{figure}

\begin{example} \label{exm:local-minimum}  
Consider a MG with $\mathcal{S} = \left\{ s_1,s_2,s_3 \right\}$, $\mathcal{A} = \left\{ a_1 \right\}$, and $\mathcal{B} = \left\{ b_1, b_2 \right\}$. The transition probabilities and rewards are defined in \cref{tbl:counter-game}. The columns represent actions. When only one column exists in a state, all actions behave identically. The top row of each cell represents the reward associated with the action, and the bottom row represents the transition probability function for that state and action. The discount factor is $\gamma = 0.6$. 
\end{example}

\begin{figure}
  \centering
  \hfill
\begin{tabular}{|c|c|}
\hline
  \multicolumn{2}{|c|}{$s_1$} \\
  \hline
  $b_1$ & $b_2$ \\
  \hline
  $-\sqrt{2}/2$ & $-\sqrt{2}/2$ \\
$[0,0,1]$ & $[0,1,0]$ \\
\hline
\end{tabular}
\hfill
\begin{tabular}{|c|}
\hline
  $s_2$ \\
  \hline
  $b_1$ \\
\hline
    $-1/2$ \\ 
 $[0,1,0]$ \\
\hline
\end{tabular}
\hfill
\begin{tabular}{|c|}
\hline
  $s_3$ \\
  \hline
  $b_1$ \\
\hline
    $1/2$ \\ 
 $[0,0,1]$ \\
\hline
\end{tabular}
\hfill
\caption{Rewards and transition probabilities of the Markov game for states $s_1, s_2, s_3$ from \cref{exm:local-minimum}.} \label{tbl:counter-game}
\end{figure}

The following theorem formally states that \cref{exm:local-minimum} is a counterexample to the optimality of FT.
\begin{theorem} \label{thm:counter-example}
FT in \cref{alg:FilarTolwinski} initialized to $\bm{v}^0 = \bm{0}$ and applied to the MG in \cref{exm:local-minimum}  visits only suboptimal policies and never terminates.
\end{theorem}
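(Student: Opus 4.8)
The plan is to exhibit one \emph{admissible} run of FT that hangs inside the backtracking line search of its very first iteration, and to attribute the failure to the discontinuity of $\nabla\psi_2(\cdot)^2$ at the starting point $\bm{v}^0=\bm{0}$. (Note the selection step of \cref{alg:FilarTolwinski} is set-valued, so it suffices to produce one admissible selection leading to non-termination; indeed the alternative selection at $\bm{0}$ would converge in one step.) First I would compute the value functions by hand. Since the maximizer has a single action and $s_2,s_3$ are absorbing, $v\opt_{s_2}=(-1/2)/(1-\gamma)=-5/4$ and $v\opt_{s_3}=(1/2)/(1-\gamma)=5/4$; in $s_1$ the minimizer strictly prefers $b_2$ (which routes to $s_2$) over $b_1$ (which routes to $s_3$), so $b_2$ is the unique optimal adversary action and any run using $b_1$ in $s_1$ visits only suboptimal policies. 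Evaluating the backup at $\bm{0}$ shows both $b_1,b_2$ return the same value $-\sqrt{2}/2$ in $s_1$ (the $\min$ over $\{v_{s_2},v_{s_3}\}$ is attained by both, since $v_{s_2}=v_{s_3}=0$), so $\bm{0}$ sits exactly on the kink of $\mathfrak{T}\opt$ and $b_1\in\mathfrak{B}\opt\bm{0}$ is a legitimate choice.

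Taking $\bm{\sigma}^1=b_1$, the Newton direction $\bm{d}^1=(\bm{I}-\gamma\bm{P}^{b_1})^{-1}\bm{r}^{b_1}-\bm{v}^0$ is just the value of the fixed $b_1$-policy, and I would compute $\bm{d}^1=(3/4-\sqrt2/2,\,-5/4,\,5/4)$. The heart of the argument is a sign mismatch along the ray $\bm{v}(t)=t\bm{d}^1$ for $t\in(0,1]$: for every $t>0$ one has $v_{s_2}(t)=-\tfrac54 t<0<\tfrac54 t=v_{s_3}(t)$, so the active minimizer in $s_1$ \emph{flips} from $b_1$ to $b_2$. Computing the three components of $\mathfrak{T}\opt\bm{v}(t)-\bm{v}(t)$ under this flipped minimizer (with $c:=\sqrt2/2$, the $s_1$ residual is $-c(1-t)-\tfrac32 t$ and the $s_2,s_3$ residuals are $\mp\tfrac12(1-t)$) gives the closed form $\psi_2(t\bm{d}^1)^2=1+(3c-2)\,t+K t^2$ with $K=(3/2-c)^2+1/2>0$.

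Since $\sqrt2/2>2/3$, i.e. $3c-2>0$, the objective $\psi_2(\cdot)^2$ \emph{strictly increases} along the ray. By contrast, the gradient formula FT uses, $\nabla\psi_2(\bm{0})^2=2(\gamma\bm{P}^{b_1}-\bm{I})\tr(\mathfrak{T}\opt\bm{0}-\bm{0})$, is built from the $b_1$ matrix and yields $(\bm{d}^1)\tr\nabla\psi_2(\bm{0})^2=-2<0$ — the formula ``believes'' $\bm{d}^1$ is a descent direction while the true function is going up; this is precisely the discontinuity. Substituting into Armijo's test, the backtracking inequality in Line~\ref{ln:ft-backtrack} becomes $1+(3c-2)t+Kt^2\le 1-2\delta t$, i.e. (dividing by $t>0$) $Kt\le 2-3c-2\delta$. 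The right-hand side is negative because $3c>2$ and $\delta>0$, while the left-hand side is positive for every $t=\beta^i\in(0,1]$, so \emph{no} $i\in\Nats$ satisfies the condition. The minimum defining $i_k$ is thus over the empty set, the line search loops forever, and FT never leaves its first (suboptimal) policy $b_1$ and never terminates.

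The main obstacle I expect is getting the residual bookkeeping along the ray exactly right — confirming the $\min$ truly switches to $b_2$ on the whole segment $t\in(0,1]$ and deriving the clean quadratic $\psi_2(t\bm{d}^1)^2=1+(3c-2)t+Kt^2$ — and then verifying the single numerical inequality $\sqrt2/2>2/3$ on which the entire sign reversal hinges (true one-sided derivative positive, formula directional derivative $-2$). Everything else is routine $3\times3$ linear algebra, so the reward value $\sqrt2/2$ is chosen precisely to make $3c-2>0$ while keeping the formula gradient strictly negative.
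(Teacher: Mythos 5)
Your proposal is correct and follows essentially the same route as the paper's proof: select $b_1\in\mathfrak{B}\opt\bm{0}$ (admissible since $v^0_{s_2}=v^0_{s_3}=0$), compute $\bm{d}^1=(3/4-\sqrt{2}/2,\,-5/4,\,5/4)$, observe the active minimizer flips to $b_2$ along the ray, and obtain the same increasing quadratic $\psi_2(t\bm{d}^1)^2-\psi_2(\bm{0})^2=\frac{3\sqrt{2}-4}{2}t+\frac{13-6\sqrt{2}}{4}t^2>0$ for $t\in(0,1]$, so no Armijo step exists. Your version is in fact slightly more explicit than the paper's, since you also compute the directional-derivative term $(\bm{d}^1)\tr\nabla\psi_2(\bm{0})^2=-2$ and spell out why the Armijo inequality is infeasible rather than only noting that the objective increases.
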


We note that \citet[theorem~3.3]{Filar1991} assumes that FT is initialized to a constant value determined by the maximum reward instead of a zero vector. However, the algorithm makes no progress even with such initialization as shown in the appendix. 

We now discuss the gap in the proof of convergence in~\citet[theorem~3.3]{Filar1991} which \cref{thm:counter-example} contradicts. As noted in~\citet[theorem~2.1]{Filar1991} the function $\bm{v} \mapsto \psi_2(\bm{v})^2$ is differentiable almost everywhere. However, because the function is not differentiable everywhere, Armijo's rule fails to find a positive step size. \Cref{exm:local-minimum} initializes FT in exactly a point of non-differentiability. One attempt to circumvent this problem would be to argue that the probability of being at a point of non-differentiability is zero. However, it may be possible to modify our example so that the initialization does not happen at a point of non-differentiability. Yet, the line search method will take increasingly smaller steps, such that it approaches the point of non-differentiability without ever passing it. Because it is unclear how one may rectify the non-differentiability to ensure Newton's method's convergence, we propose an alternative approach in the following section.

\section{RCPI: Residual Conditioned Policy Iteration} \label{sec:rcpi:-resid-corr}

In this section, we propose and analyze a new algorithm, RCPI, for solving MGs and RMDPs. RCPI builds on the strengths of PAI and FT but with convergence guarantees. Our theoretical analysis demonstrates that RCPI is guaranteed to converge to the optimal value function at a rate that at least matches that of value iteration.

\begin{algorithm}
  \caption{RCPI: Residual Conditioned PI}
    \label{alg:RCPI}
    \KwIn{Initial value $\bm{v}^0$, tolerance $\epsilon$, backup tolerance $\delta < \epsilon \cdot  \frac{(1-\gamma)^2}{2\gamma (3+\gamma)}$, max recovery steps $m\in \Nats$}
    \KwOut{$(\bm{\pi}, \bm{\sigma}) \in \saddle_{\epsilon}(\rhog)$}
    $k \gets 0$\;
    \Repeat{$\frac{2\gamma}{1-\gamma}\left(\psi^\delta_{\infty}(\bm{v}^k) + \delta\right)  \le \epsilon$}
    {
        $k \gets k + 1$ \label{ln:CounterIncrement} \label{ln:IterationBeginning}\;
        Select $(\bm{\pi}^k,\bm{\sigma}^k) \in \mathfrak{B}^\delta \bm{v}^{k-1}$\label{ln:PolicySelection}\;
        $\bm{u}^{k,0} \gets (\bm{I} - \gamma\bm{P}^{\bm{\pi}^k,\bm{\sigma}^k})^{-1}\bm{r}^{\bm{\pi}^k,\bm{\sigma}^k}$\label{ln:PolicyEvaluation}\;
        \lIf{$\gamma^{m-1}\psi^\delta_{\infty}(\bm{u}^{k,0}) + \frac{2(1 + \gamma)\delta}{1-\gamma} >  \psi^\delta_{\infty}(\bm{v}^{k-1})$ \label{ln:InitialResidualCheck}} {
            $\bm{v}^k \gets \mathfrak{T}^\delta \bm{v}^{k-1}$
        }
        \Else {
          \label{ln:FixingCheck}
          $l \gets 0$\;
          \lWhile{$\psi^\delta_{\infty}(\bm{u}^{k,l}) > \gamma \psi^\delta_{\infty}(\bm{v}^{k-1}) + 2(1+\gamma)\delta$} {
                \label{ln:FixingBackups}
                 $\bm{u}^{k,l+1} \gets \mathfrak{T}^\delta  \bm{u}^{k,l} $; $l \gets l + 1$
            }
            $\bm{v}^k \gets \bm{u}^{k,l}$ \label{ln:IterationEnding}\;
        }
    } \label{ln:TerminationCheck}
    \KwRet $(\bm{v}^k, \bm{\pi}^k, \bm{\sigma}^k)$\;
\end{algorithm}

RCPI, summarized in \cref{alg:RCPI}, can be viewed as a direct modification of FT in \cref{alg:FilarTolwinski}. The first two steps of RCPI's iteration are identical to FT. First, RCPI jointly updates both the primary and adversarial policies to be greedy with respect to the current value function. Second, RCPI evaluates the value function for the updated policies. Simply adopting this value function would lead to PAI (see the appendix), which is prone to getting stuck in infinite cycles~\cite{VanderWal1978}. Such infinite cycles must involve steps that do not decrease the residual. RCPI detects when the residual does not decrease sufficiently and reverts to a value function update to guarantee its reduction. As a result, RCPI will never cycle or terminate before reaching the optimal value function. 

RCPI guarantees convergence to the optimal value function as follows. Each iteration of the outer loop guarantees that the residual of the incumbent value function decreases at least by the factor $\gamma$. The parameter $m$ determines how reduction is achieved. If the residual of the proposed value function can be reduced in at most $m$ steps of value iteration, then the Bellman operator is applied until the reduction is achieved. Otherwise, the proposed value function is discarded and replaced by a plain value iteration update.  

We now turn to the proof of RCPI's correctness and computation complexity. First, we need to discuss the worst-case runtime of the Bellman backups. For s-rectangular $L_1$ robust MDPs the runtime of computing $\mathfrak{T}^\delta\bm{v}$ and $\mathfrak{B}^\delta\bm{v}$ is~\cite{Ho2021}
\begin{align*}
    T_{\mathrm{R}} = O \left( S^{4.5}A^{4.5} \right)~,
\end{align*}
and, for MGs, it is given by \cref{lem:GameBackupRuntime}.
\begin{proposition} \label{lem:GameBackupRuntime}
The runtime $T_{\mathrm{G}}$ of computing $\mathfrak{T}^\delta\bm{v}$ and $\mathfrak{B}^\delta\bm{v}$ for a Markov game satisfies that
\begin{align} \label{eq:MatrixGameRuntime}
T_{\mathrm{G}}  = O\left( S^2AB + S(A + B)^{1.5}(A)^2\log(\delta^{-1}) \right) ,
\end{align}
where, without loss of generality, $A \ge B$.
%If $A < B$, the rewards can be negated by swapping the actions available to the maximizing and minimizing agents, and exchanging the values of $A$ and $B$ without affecting the optimal policies.
\end{proposition}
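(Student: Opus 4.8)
The plan is to exploit the $s$-rectangular structure of the equilibrium Bellman operator, which decouples a single backup into $S$ independent matrix games, and then to bound separately the cost of assembling these games and the cost of solving each to accuracy $\delta$. Since $\mathfrak{T}\opt_s \bm{v} = \max_{\bm{\pi}}\min_{\bm{\sigma}} \mathfrak{T}^{\bm{\pi},\bm{\sigma}}_s \bm{v}$ depends only on the restrictions of $\bm{\pi},\bm{\sigma}$ to state $s$, the backup at $s$ is the saddle value of the bilinear matrix game with payoff matrix $\bm{M}^s \in \Real^{A\times B}$, $M^s_{a,b} = r(s,a,b) + \gamma\,\bm{p}(s,a,b)\tr\bm{v}$; its value is $\mathfrak{T}\opt_s\bm{v}$, and a pair of (approximately) optimal mixed strategies supplies the state-$s$ component of $\mathfrak{B}\opt\bm{v}$. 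As noted after~\eqref{eq:saddle-point}, a real-valued bilinear saddle is a linear program, so the entire backup reduces to forming $S$ matrices and solving $S$ matrix-game LPs, from whose primal and dual optima both $\mathfrak{T}^\delta\bm{v}$ and $\mathfrak{B}^\delta\bm{v}$ are read off simultaneously.

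First I would bound the assembly cost. Each entry $M^s_{a,b}$ is a single inner product $\bm{p}(s,a,b)\tr\bm{v}$ of length at most $S$ together with a reward lookup, hence $O(S)$; there are $SAB$ entries in total across the $S$ states, giving the $O(S^2AB)$ term.

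Next I would bound the per-state solve. Writing the game as a player's LP, e.g. $\max\{\,t : \bm{M}^s{}\tr\bm{\pi}\ge t\bm{1},\ \bm{1}\tr\bm{\pi}=1,\ \bm{\pi}\ge\bm{0}\,\}$, and introducing slacks yields $n = \Theta(A+B)$ variables and a number of equality constraints $m$ equal to $A+1$ or $B+1$; since $A\ge B$ we have $m\le A+1$ in either orientation. A standard short-step interior-point method returns a solution whose objective and constraint residuals are at most $\delta$ in $O\bigl(\sqrt{A+B}\,\log(\delta^{-1})\bigr)$ iterations, each dominated by forming and factorizing the $m\times m$ normal-equation matrix at cost $O(m^2 n + m^3) = O\bigl(A^2(A+B)\bigr)$. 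Multiplying the two factors gives $O\bigl((A+B)^{1.5}A^2\log(\delta^{-1})\bigr)$ per state; summing over the $S$ states produces the second term and, combined with the assembly cost, the claimed bound.

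The step I expect to require the most care is the translation between ``solving each matrix-game LP to additive accuracy $\delta$'' and the operator specifications, namely~\eqref{eq:bellman-delta} and the defining inclusion of $\mathfrak{B}^\delta$. I would argue that because the payoff entries are uniformly bounded ($r$ lies in $[-r_{\max},r_{\max}]$ and $\bm{v}$ is fixed), the initial duality gap is polynomially bounded, so the interior-point iteration count depends on $\delta$ only through the $\log(\delta^{-1})$ factor rather than through the problem data. Because the backup decouples over states, a per-state objective error of at most $\delta$ is \emph{exactly} an $L_\infty$ error of at most $\delta$ in $\mathfrak{T}^\delta\bm{v}$, and per-state $\delta$-approximate saddle strategies assemble into a pair lying in $\saddle_\delta\bigl((\bm{\pi},\bm{\sigma})\mapsto\mathfrak{T}^{\bm{\pi},\bm{\sigma}}\bm{v}\bigr)$, meeting the requirement on $\mathfrak{B}^\delta$. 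Invoking the cited interior-point complexity for matrix-game LPs then closes the argument.
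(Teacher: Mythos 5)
Your proposal is correct and follows essentially the same route as the paper's proof: decompose the backup into $S$ independent matrix games $\bm{G}^{\bm{v},s}$, charge $O(S^2AB)$ for assembling them, and solve each as a matrix-game LP with $O(A+B)$ constraints and $O(A)$ variables via an interior-point method at cost $O\bigl((A+B)^{1.5}A^2\log(\delta^{-1})\bigr)$ per state. Your closing paragraph on translating per-state $\delta$-accuracy into the operator-level guarantees of $\mathfrak{T}^\delta$ and $\mathfrak{B}^\delta$ is a detail the paper leaves implicit, but it does not change the argument.
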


We are now ready to state the central claim of this section, which proves the correctness and computational complexity of RCPI. 
\begin{theorem} \label{thm:RCPIRuntime}
  Suppose that $\gamma > 0$, and $\epsilon > 0$ satisfies that
  \[
    \epsilon
    \; >\;  \frac{2(1+\gamma)\delta}{(1-\gamma)^2}
    \; >\;  0.
  \]
  for $\delta$ in~\eqref{eq:bellman-delta}. Then \cref{alg:RCPI} returns $(\bm{\pi},\bm{\sigma}) \in \saddle_\epsilon(\rho)$ in $O\left( Z \left(T \cdot ( 1 + m ) + S^2AB + S^3\right) \right)$ operations where
\begin{align} \label{eq:RCPIRuntime}
  Z :=
    \left\lceil
    \frac{\log\left(\frac{1-\gamma}{2\gamma}\epsilon  -
        \frac{3+\gamma}{1-\gamma}\delta\right) -
        \log(r_{\max}+ \delta)}{\log(\gamma)} 
    \right\rceil,
\end{align}
and $T \in \left\{ T_{\mathrm{R}}, T_{\mathrm{G}} \right\}$ is the complexity of computing $\mathfrak{T}^\delta\bm{v}$ and $\mathfrak{B}^\delta\bm{v}$ for RMDP or MG, respectively.
\end{theorem}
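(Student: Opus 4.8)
The plan is to reduce everything to a single residual-contraction invariant, then read off both termination and cost. The cornerstone is a one-step estimate: for every $\bm{w}\in\Real^S$, writing $\bm{z}=\mathfrak{T}^\delta\bm{w}$ and applying the triangle inequality to $\mathfrak{T}^\delta\bm{z}-\bm{z}=\mathfrak{T}^\delta\bm{z}-\mathfrak{T}^\delta\bm{w}$ through the intermediate points $\mathfrak{T}\opt\bm{z}$ and $\mathfrak{T}\opt\bm{w}$, then using the $\gamma$-contraction of $\mathfrak{T}\opt$ in $\|\cdot\|_\infty$ together with the $\delta$-accuracy in~\eqref{eq:bellman-delta}, I obtain
\[
  \psi^\delta_\infty(\mathfrak{T}^\delta\bm{w})\;\le\;\gamma\,\psi^\delta_\infty(\bm{w})+2\delta .
\]
Iterating gives $\psi^\delta_\infty((\mathfrak{T}^\delta)^l\bm{w})\le\gamma^l\psi^\delta_\infty(\bm{w})+\tfrac{2\delta}{1-\gamma}$, which I use repeatedly.

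From here I would prove the central per-iteration invariant $\psi^\delta_\infty(\bm{v}^k)\le\gamma\,\psi^\delta_\infty(\bm{v}^{k-1})+2(1+\gamma)\delta$. In the branch of \cref{alg:RCPI} that applies a single backup (\ref{ln:InitialResidualCheck}) this is immediate from the one-step estimate, since $2\delta\le 2(1+\gamma)\delta$; in the recovery branch it is exactly the negation of the while-guard in~\ref{ln:FixingBackups} at the moment of exit. The delicate part is showing that the inner loop actually exits within the budget $m$, so that the branch costs at most $m$ extra backups. For this I would multiply the branch-entry test of~\ref{ln:InitialResidualCheck} by $\gamma$ and combine it with the $l=m$ case of the multi-step estimate applied to $\bm{u}^{k,0}$; the $\delta$-terms then line up precisely because of the elementary inequality $\gamma(1+\gamma)\ge\gamma^2$, forcing $\psi^\delta_\infty(\bm{u}^{k,m})$ below the exit threshold $\gamma\,\psi^\delta_\infty(\bm{v}^{k-1})+2(1+\gamma)\delta$. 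I expect this to be the main obstacle, since it requires the entry test, the exit test, and the geometric decay to align exactly through the $\delta$-bookkeeping.

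With the invariant established, the iteration count follows by unrolling: writing $\rho_k:=\psi^\delta_\infty(\bm{v}^k)$ and subtracting the fixed point $\tfrac{2(1+\gamma)\delta}{1-\gamma}$ gives $\rho_k\le\gamma^k\rho_0+\tfrac{2(1+\gamma)\delta}{1-\gamma}$, where $\rho_0\le r_{\max}+\delta$ for the standard initialization $\bm{v}^0=\bm{0}$ because $\|\mathfrak{T}\opt\bm{0}\|_\infty\le r_{\max}$. Forcing the right-hand side below the stopping threshold $\tfrac{1-\gamma}{2\gamma}\epsilon-\delta$ and solving for $k$ reproduces exactly $Z$ in~\eqref{eq:RCPIRuntime}; the key algebraic check is that $-\tfrac{3+\gamma}{1-\gamma}\delta+\tfrac{2(1+\gamma)\delta}{1-\gamma}=-\delta$, so the $Z$-th residual meets the test, and positivity of the log argument is guaranteed by the backup-tolerance constraint $\delta<\epsilon\tfrac{(1-\gamma)^2}{2\gamma(3+\gamma)}$. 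For the per-iteration cost I would tally one policy backup $\mathfrak{B}^\delta$ and a constant number of residual evaluations ($O(T)$), the formation of $\bm{r}^{\bm{\pi}^k,\bm{\sigma}^k}$ and $\bm{P}^{\bm{\pi}^k,\bm{\sigma}^k}$ ($O(S^2AB)$), the linear solve for $\bm{u}^{k,0}$ ($O(S^3)$), and at most $m$ recovery backups ($O(mT)$), giving $O\!\left(T(1+m)+S^2AB+S^3\right)$ per iteration and hence the claimed total.

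Finally, for correctness I would convert the stopping condition into a saddle guarantee. Since $\|\mathfrak{T}^\delta\bm{v}-\mathfrak{T}\opt\bm{v}\|_\infty\le\delta$ yields $\psi_\infty(\bm{v})\le\psi^\delta_\infty(\bm{v})+\delta$, the terminal test $\tfrac{2\gamma}{1-\gamma}(\psi^\delta_\infty(\bm{v}^Z)+\delta)\le\epsilon$ implies $\tfrac{2\gamma}{1-\gamma}\psi_\infty(\bm{v}^Z)\le\epsilon$, so an approximate form of \cref{prop:value-approximation-error}---accounting for $\mathfrak{B}^\delta$ returning $\delta$-saddle rather than exact policies---places the greedy policies extracted from the terminal value function in $\saddle_\epsilon(\rho)$. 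Here I would take care to track whether the output policies should be recomputed from $\bm{v}^Z$ rather than carried over from $\bm{v}^{Z-1}$, and to fold the extra $\delta$-slack from $\mathfrak{B}^\delta$ into the constants, which the hypothesis $\epsilon>\tfrac{2(1+\gamma)\delta}{(1-\gamma)^2}$ is designed to absorb.
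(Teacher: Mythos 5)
Your proposal is correct and follows essentially the same route as the paper's proof: an approximate-contraction lemma for $\mathfrak{T}^\delta$, the per-iteration invariant $\psi^\delta_\infty(\bm{v}^{k})\le\gamma\,\psi^\delta_\infty(\bm{v}^{k-1})+2(1+\gamma)\delta$ established by the same two-branch case analysis (with the recovery loop's termination within $m$ steps derived from the entry test of Line~\ref{ln:InitialResidualCheck}), unrolling to obtain $Z$, and \cref{prop:value-approximation-error} for the saddle-point guarantee. Your one-step bound $\gamma\,\psi^\delta_\infty(\bm{w})+2\delta$ is in fact slightly tighter than the paper's $2(1+\gamma)\delta$ constant, but since you still feed it into the same invariant the overall argument and conclusion are unchanged.
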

The proof of \cref{thm:RCPIRuntime} follows standard contraction arguments and is deferred to the appendix. The main argument relies on the following lemma, which bounds the computational time and establishes the contraction property of each iteration of RCPI.
\begin{lemma} \label{lem:IterationContraction}
Each loop of \cref{alg:RCPI} (Lines~\ref{ln:IterationBeginning}--\ref{ln:IterationEnding})  runs in
\begin{align} \label{eq:IterationRunTime}
    O\left( \left( 1 + m \right) T  + S^2AB + S^3\right)
\end{align}
operations for $T \in  \left\{ T_{\mathrm{R}}, T_{\mathrm{G}} \right\}$
and $(\bm{v}^k)_{k \in \Nats}$ satisfies that
\begin{align} \label{eq:BellmanContraction}
\psi^\delta_\infty(\bm{v}^{k+1})
\; \le\;
\gamma\cdot \psi^\delta_\infty(\bm{v}^k) + 2\cdot (1+\gamma)\cdot \delta~.
\end{align}
\end{lemma}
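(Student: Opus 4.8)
The plan is to prove the two claims of \cref{lem:IterationContraction}---the per-iteration runtime~\eqref{eq:IterationRunTime} and the residual contraction~\eqref{eq:BellmanContraction}---separately, with both resting on a single one-step residual estimate for the approximate operator $\mathfrak{T}^\delta$. That estimate is: for every $\bm{w}\in\Real^S$,
\[
\psi^\delta_\infty(\mathfrak{T}^\delta \bm{w}) \;\le\; \gamma\,\psi^\delta_\infty(\bm{w}) + 2\delta .
\]
I would derive it by writing $\mathfrak{T}^\delta \bm{x} = \mathfrak{T}\opt \bm{x} + \bm{e}(\bm{x})$ with $\|\bm{e}(\bm{x})\|_\infty \le \delta$ from~\eqref{eq:bellman-delta}, expanding $\mathfrak{T}^\delta(\mathfrak{T}^\delta\bm{w}) - \mathfrak{T}^\delta\bm{w}$, bounding the $\mathfrak{T}\opt$-difference by $\gamma\|\mathfrak{T}^\delta\bm{w}-\bm{w}\|_\infty = \gamma\,\psi^\delta_\infty(\bm{w})$ via the $\gamma$-contraction of $\mathfrak{T}\opt$, and absorbing the two error terms into $2\delta$. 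Iterating this along the value-iteration sequence $\bm{u}^{k,l+1}=\mathfrak{T}^\delta\bm{u}^{k,l}$ and summing the geometric series gives the key inequality
\[
\psi^\delta_\infty(\bm{u}^{k,l}) \;\le\; \gamma^{\,l}\,\psi^\delta_\infty(\bm{u}^{k,0}) + \frac{2\delta}{1-\gamma}, \qquad l\in\Nats.
\]

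For the contraction~\eqref{eq:BellmanContraction}, I would split on the branch taken in \cref{alg:RCPI}. In the \textbf{if}-branch (\cref{ln:InitialResidualCheck}) we set $\bm{v}^k=\mathfrak{T}^\delta\bm{v}^{k-1}$, so the one-step estimate gives $\psi^\delta_\infty(\bm{v}^k)\le\gamma\,\psi^\delta_\infty(\bm{v}^{k-1})+2\delta$, which is stronger than claimed since $2\delta\le 2(1+\gamma)\delta$. In the \textbf{else}-branch the assignment $\bm{v}^k=\bm{u}^{k,l}$ is made exactly when the \textbf{while}-guard on \cref{ln:FixingBackups} fails, i.e.\ when $\psi^\delta_\infty(\bm{u}^{k,l})\le\gamma\,\psi^\delta_\infty(\bm{v}^{k-1})+2(1+\gamma)\delta$, which is literally~\eqref{eq:BellmanContraction}. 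So the contraction holds regardless of the branch, \emph{provided the while loop terminates}.

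The substance of the proof---and the main obstacle---is therefore termination of the \textbf{while} loop within $m$ steps, on which both the runtime and the validity of the else-branch argument hinge. Entering the else-branch means the guard on \cref{ln:InitialResidualCheck} is false, i.e.
\[
\gamma^{\,m-1}\psi^\delta_\infty(\bm{u}^{k,0}) + \frac{2(1+\gamma)\delta}{1-\gamma} \;\le\; \psi^\delta_\infty(\bm{v}^{k-1}).
\]
Multiplying by $\gamma$ and combining with the iterated bound evaluated at $l=m$ yields
\[
\psi^\delta_\infty(\bm{u}^{k,m}) \;\le\; \gamma\,\psi^\delta_\infty(\bm{v}^{k-1}) - \frac{2\gamma(1+\gamma)\delta}{1-\gamma} + \frac{2\delta}{1-\gamma},
\]
and a short calculation using $\tfrac{1-\gamma-\gamma^2}{1-\gamma}\le 1+\gamma$ (equivalently $-\gamma\le 0$) shows the right-hand side is at most $\gamma\,\psi^\delta_\infty(\bm{v}^{k-1})+2(1+\gamma)\delta$. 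Hence the while-guard must fail by iteration $l=m$, so the loop performs at most $m$ applications of $\mathfrak{T}^\delta$. I expect the delicate point to be checking that the algorithm's specific constants---the $\gamma^{m-1}$ in the guard and the two appearances of $2(1+\gamma)\delta$---align exactly so this inequality closes; this is where the design of RCPI is doing real work.

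Finally, the runtime~\eqref{eq:IterationRunTime} follows by accounting. The policy backup $\mathfrak{B}^\delta\bm{v}^{k-1}$ on \cref{ln:PolicySelection} costs $T$; the exact policy evaluation on \cref{ln:PolicyEvaluation} costs $O(S^2AB)$ to assemble $\bm{r}^{\bm{\pi}^k,\bm{\sigma}^k}$ and $\bm{P}^{\bm{\pi}^k,\bm{\sigma}^k}$ and $O(S^3)$ to solve the linear system; each residual evaluation $\psi^\delta_\infty(\cdot)$ costs one application of $\mathfrak{T}^\delta$, i.e.\ $O(T)$ (and inside the loop it is obtained for free as a byproduct of the backup); and the while loop contributes at most $m$ further applications of $\mathfrak{T}^\delta$, i.e.\ $O(mT)$. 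Summing gives $O\!\left((1+m)T + S^2AB + S^3\right)$, as claimed.
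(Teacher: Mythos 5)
Your proof is correct and follows essentially the same route as the paper's: a one-step approximate-contraction estimate for $\mathfrak{T}^\delta$ (the paper's \cref{lem:ModifiedContraction}), iterated along the inner value-iteration sequence and combined with the negated guard on \cref{ln:InitialResidualCheck} to show the while loop exits within $m$ steps, plus the same operation count. Your one-step constant $2\delta$ is in fact tighter than the paper's $2(1+\gamma)\delta$, which lets your termination inequality close with room to spare.
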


Note that the number of Bellman backups required for RCPI to find  $(\bm{\pi},\bm{\sigma}) \in \saddle_\epsilon(\rho)$ shares the same upper bound as robust VI for both games and robust MDPs if the maximum number of recovery steps $m$ is set to 0. RCPI's main attraction is that it can leverage its exact policy evaluation to aid in finding an optimal solution, while ignoring or correcting it when issues arise. This gives it speeds that are close to, if not faster than, PAI when solving problems in practice. As a result, the worst-case time complexity of RCPI is no worse than value iteration, but it offers significant possible speedup due to the policy evaluation step. 

\section{Numerical Results} \label{sec:numerical-results}

To evaluate the effectiveness of RCPI a series of examples were solved using a range of algorithms including PAI, Hoffman Karp (HK), Filar Tolwinski's algorithm (FT), robust value iteration (VI), a variation of Hoffman Karp (PPI)~\cite{Ho2021}, a variation on PAI (WS)~\cite{Winnicki2023}, and our algorithm (RCPI). To simplify comparison, RCPI's hyperparameter $m$ was either set to 0, producing a method that never fixed its evaluation step, called RCPI$_0$, or $m$ was left unbounded, making a method that always fixed its evaluation step, called RCPI$_\infty$. The full source code for the algorithms and domains is available at \url{https://github.com/keithbadger/Fast-Policy-Iteration-for-Markov-Games-and-Robust-MDPs}.

For each domain, there was a smaller set of problems solved with $\gamma$ set to $0.5$, $0.75$, $0.9$, and $0.99$, and a larger set of problems where we set $\gamma$ to $0.9$, $0.99$, and $0.999$. We use the smaller problems in order to evaluate the slower algorithms in a reasonable time. The larger problems can only be solved by the faster methods within our time limits. We allocated a larger time budget to VI to establish a reference.  

\begin{table*}
    \centering
    \begin{tabular}{lrrrrrrrr}
        \toprule
        &\multicolumn{2}{c}{Markov Games} &\multicolumn{2}{c}{Inventory} &\multicolumn{2}{c}{Gambler's Ruin} &\multicolumn{2}{c}{Gridworld}\\
        \cmidrule(lr){2-3} \cmidrule(lr){4-5} \cmidrule(lr){6-7} \cmidrule(lr){8-9} 
        \multicolumn{1}{c}{Algorithm} & \multicolumn{1}{c}{Small} & \multicolumn{1}{c}{Large} & \multicolumn{1}{c}{Small} & \multicolumn{1}{c}{Large} & \multicolumn{1}{c}{Small} & \multicolumn{1}{c}{Large} & \multicolumn{1}{c}{Small} & \multicolumn{1}{c}{Large}\\
        \midrule
        RCPI$_\infty$ & 0.3 & 2.3 & 0.1 & 84.8 & 4.7 & 54.3 & 1.5 & 23.7\\
        RCPI$_0$ & 0.3 & 2.3 & 0.2 & 87.8 & 5.1 & 54.4 & 1.4 & 23.7\\
        VI & 3.4 & 253.0 & 2.4 & 23629.6 & 8.7 & 106.6 & 6.9 & 145.2\\
        PAI & 0.3 & 2.3 & 0.1 & 87.2 & 4.8 & 54.3 & 1.4 & 23.6\\
        FT & 0.3 & 2.4 & 0.2 & 86.9 & 5.6 & 77.1 & 1.4 & 23.3\\
        HK & 0.5 & * & 0.4 & * & 14.4 & * & 5.7 & *\\
        WS & 1.0 & * & 0.8 & * & 5.2 & * & 3.2 & *\\
        PPI & 0.6 & * & 0.4 & * & 10.6 & * & 4.9 & *\\
        \bottomrule
    \end{tabular}
    \caption{The median runtime of each algorithm's in seconds for the small and large problem sets of every domain.}
    \label{tab:Summary}
\end{table*}

\begin{figure}
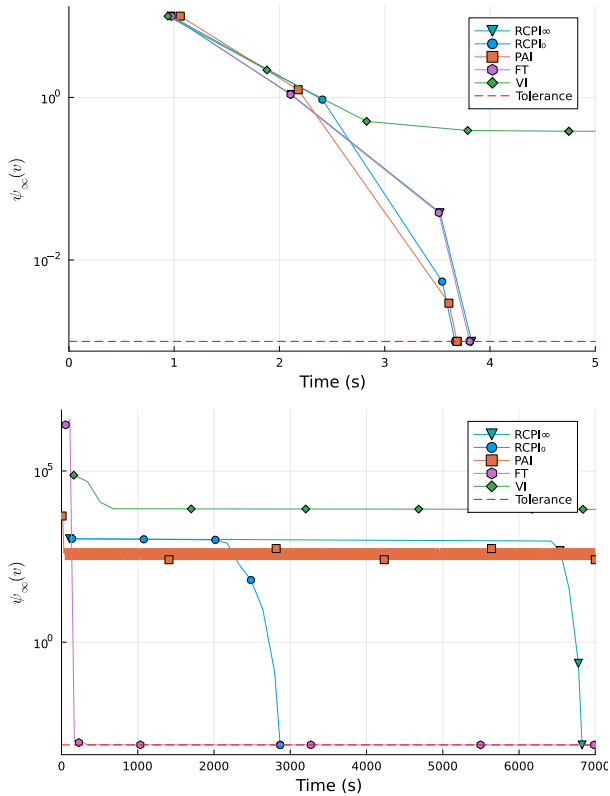

    \begin{subfigure}{.45\textwidth}
        \includegraphics[width=1\linewidth]{figures/games\_large.pdf}
    \end{subfigure}
    \begin{subfigure}{.45\textwidth}
        \includegraphics[width=1\linewidth]{figures/inv\_large.pdf}
        \label{fig:InventoryInPaper}
    \end{subfigure}
    \caption{The Bellman residual of each algorithm's value function plotted as a function of time for the \emph{large Markov games} \emph{(top)} with 200 to 1000 states, and the \emph{large inventory problems} \emph{(bottom)} with 40 to 200 states.}
    \label{fig:BellmanResiduals}
\end{figure}

Examining \cref{tab:Summary}, VI is the slowest algorithm tested for games. Every other algorithm achieves a faster median solve time. The difference in solution time is due to VI exclusively using policy improvement steps to improve its estimated value function, whereas other methods incorporate policy evaluation steps, which are often more efficient.

WS is the closest method to VI conceptually, only adding a fixed number of policy evaluation backups in between policy improvement steps. The policy evaluation backups can significantly reduce the solve time of games, as shown in \cref{tab:Summary}, where all of WS's median runtimes are below those of VI.

The remaining methods use exact policy evaluation steps. HK and PPI's evaluations differ from the others, as they both evaluate the primary policy by holding it constant and optimizing the adversary. In contrast, the other methods hold both policies constant and find the stationary value function $\bm{v}$ which satisfies $\mathfrak{T}\bm{v} = \bm{v}$. Although HK and PPI's policy evaluation methods do improve upon VI's solve time, they are still cumbersome when compared to the closed-form methods of PAI, FT, RCPI$_0$, and RCPI$_\infty$, which evaluate both policies simultaneously. As a result, HK and PPI are the next slowest methods for games. 

The median runtimes of the closed-form methods were approximately the same across all domains. RCPI$_0$ and RCPI$_\infty$ were always within a few seconds of the fastest runtime. The worst-case runtimes of the closed-form methods from \cref{fig:BellmanResiduals} were similarly close, with all of their Bellman residual curves indicating a super-linear convergence rate.

Several domains were tested for robust MDPs including gamblers ruin~\cite{Kallenberg2022}, gridworld~\cite[section 6.5]{Sutton1998}, and inventory management~\cite[section 3.2]{Puterman2005}. From \cref{tab:Summary}, each algorithm maintained the same relative performance from Markov games, except that WS and VI did comparatively better in the gambler's ruin. WS and VI did well because the optimal betting scheme in the non-robust version of gambler's ruin is to bet \$1 if the win rate is greater than $50\%$ and to bet all money otherwise. The gambler repeats this action until reaching the maximum capital, when it obtains the reward. The result is a singular optimal state trajectory following the current state. The reward from obtaining the maximum capital does not affect the current state's policy until there is a Bellman backup for every state in the optimal trajectory following it. This type of domain favors methods with inexpensive evaluations, as only states that reward has been reached via policy improvement will be worth evaluating. Time spent doing exact evaluations for the other states does not provide any benefit.

In \cref{tab:Summary}, the closed evaluation methods have the lowest median runtimes for the small and large inventory problem sets. By examining the inventory problems in \cref{fig:BellmanResiduals}, PAI stops converging around $10^2$, where it becomes stuck cycling between suboptimal value function estimates, as described in~\cite{VanderWal1978}. At points, FT's Bellman residual also increases, which comes from using $\psi_2(\bm{v})^2$ as an objective instead of $\psi_\infty(\bm{v})$. The larger number of available actions and transitions stemming from those actions makes policy improvement for inventory management more expensive than for the other domains. As a result, methods that evaluate their policies simultaneously benefit more heavily from limiting the number of policy improvement steps needed to converge.

\section{Conclusion}

Historically, solving Markov games and robust MDPs has involved choosing between a slow method that always converges, or a fast method that may never finish. Attempts have been made to provide a solution with both speed and convergence guarantees, such as the algorithm of Filar and Tolwinski, but such attempts have failed. RCPI is a simple solution which provides the best possible worst-case convergence rate, and empirically performs as fast if not faster than any method proposed before it.

It remains to be seen if there is a way to fix the algorithm of Filar and Tolwinski that keeps the Newtonian interpretation of PAI with $\psi_2(\bm{v})^2$ as the objective function. Such a solution could provide insight into how to deal with discontinuities caused by the min and max operations more generally. It is not known how to optimize the hyperparameter of RCPI $m$ which could reduce the level of knowledge required to use RCPI effectively.
%We suspect that there is no optimal selection of $m$ and that increasing it has the potential to reduce the expected solve time while increasing the runtime upperbound.

\section*{Acknowledgments} 
We thank the anonymous reviewers for their detailed reviews and thoughtful comments, which significantly improved the paper's clarity. This work was supported, in part, by NSF grants 2144601 and 2218063 and ONR grant N0001425GI01179.

\bibliography{paper}

% \iffalse
  
\appendix
\onecolumn

\section{Additional Material for Section~\ref{sec:background}}
\label{sec:addit-background}

\subsection{Proof of Proposition~\ref{prop:value-approximation-error}}

\begin{proof}[Proof of \cref{prop:value-approximation-error}]
Define for each $\bm{\pi}\in \Pi, \bm{\sigma}\in \Sigma$:
\[
\mathfrak{T}^{\star, \bm{\sigma}}_s \bm{v}
:=
\max_{\bm{\pi} \in \Pi} \mathfrak{T}^{\bm{\pi}, \bm{\sigma}}_s \bm{v},
\qquad
\mathfrak{T}^{\bm{\pi},\star}_s \bm{v}
:=
\min_{\bm{\sigma} \in \Sigma} \mathfrak{T}^{\bm{\pi}, \bm{\sigma}}_s \bm{v},
\quad \forall \bm{v}\in \Real^S.
\]
It is easy to see that these operators are monotone and $L_{\infty}$ contractions with a rate $\gamma$~\cite[theorem~10.5]{Kallenberg2022}. Also define:
\[
  \bm{v}^{\star, \bm{\sigma}} := \max_{\bm{\pi}\in \Pi} \bm{v}^{\bm{\pi}, \bm{\sigma}} ,
  \qquad
  \bm{v}^{\bm{\beta}, \star} := \min_{\bm{\sigma }\in \Sigma} \bm{v}^{\bm{\pi}, \bm{\sigma}} ,
\]
where the optimization is elementwise and exists~\cite[theorem~10.4]{Kallenberg2022}.

Then, fix a value function $\bm{v}\in \Real^S$ and let $(\bm{\hat{\pi}}, \bm{\hat{\sigma}}) \in \mathfrak{B}\opt\bm{v}$:
\begin{equation} \label{eq:proof-greedy}
  \mathfrak{T}^{\bm{\hat{\pi}}, \star} \bm{v}
  = 
  \mathfrak{T}^{\bm{\hat{\pi}}, \bm{\hat{\sigma}}} \bm{v}
  = 
  \mathfrak{T}^{\star, \bm{\hat{\sigma}}} \bm{v}.
\end{equation}
where the equalities hold because the maximum and minimum policies exist. 
Then, using~\eqref{eq:proof-greedy} and~\cite[proposition~2.1.1 (d),(e)]{Bertsekas2022}:
\[
  \begin{aligned}
  \| \bm{v}^{\star, \bm{\hat{\sigma}}} - \bm{v}^{\bm{\hat{\pi}}, \bm{\hat{\sigma}}} \|_{\infty}
  &=
  \| \bm{v}^{\star, \bm{\hat{\sigma}}} - \mathfrak{T}^{\star, \bm{\hat{\sigma}}} \bm{v}
    + \mathfrak{T}^{\bm{\hat{\pi}}, \bm{\hat{\sigma}}} \bm{v} - \bm{v}^{\bm{\hat{\pi}}, \bm{\hat{\sigma}}} \|_{\infty} \\
  &\le
  \| \bm{v}^{\star, \bm{\hat{\sigma}}} - \mathfrak{T}^{\star, \bm{\hat{\sigma}}} \bm{v} \|_{\infty} +
    \| \mathfrak{T}^{\bm{\hat{\pi}}, \bm{\hat{\sigma}}} \bm{v} - \bm{v}^{\bm{\hat{\pi}}, \bm{\hat{\sigma}}} \|_{\infty} \\
  &\le
    \epsilon.
  \end{aligned}
\]
An analogous argument gives us the following inequality:
\[
  \begin{aligned}
    \| \bm{v}^{\bm{\hat{\pi}}, \star} - \bm{v}^{\bm{\hat{\pi}}, \bm{\hat{\sigma}}} \|_{\infty}
  &=
  \| \bm{v}^{\bm{\hat{\pi}}, \star} - \mathfrak{T}^{\bm{\hat{\pi}}, \star} \bm{v}
    + \mathfrak{T}^{\bm{\hat{\pi}}, \bm{\hat{\sigma}}} \bm{v} - \bm{v}^{\bm{\hat{\pi}}, \bm{\hat{\sigma}}} \|_{\infty} \\
  &\le
  \| \bm{v}^{\star, \bm{\hat{\sigma}}} - \mathfrak{T}^{\star, \bm{\hat{\sigma}}} \bm{v} \|_{\infty} +
    \| \mathfrak{T}^{\bm{\hat{\pi}}, \bm{\hat{\sigma}}} \bm{v} - \bm{v}^{\bm{\hat{\pi}}, \bm{\hat{\sigma}}} \|_{\infty} \\
  &\le
    \epsilon.
  \end{aligned}
\]
Therefore, we can conclude from the two inequalities above that
\[
    v_{s_0}^{\bm{\hat{\pi}}, \star} - \epsilon
  \le
  v_{s_0}^{\bm{\hat{\pi}}, \bm{\hat{\sigma}}}
  \le 
   v_{s_0}^{\star, \bm{\hat{\sigma}}} + \epsilon ,
\]
which is the definition of $(\bm{\hat{\pi}}, \bm{\hat{\sigma}}) \in \saddle_{\epsilon}(\rhog)$. The set $\mathfrak{B}\opt \bm{v}$ is non-empty by the construction of $\mathfrak{T}^{\star, \bm{\sigma}}$ and $\mathfrak{T}^{\bm{\pi}, \star}$.
\end{proof}

\subsection{Definitions for Robust MDPs}
\label{sec:defin-robust-mdps}
In this section, we summarize the pertinent definitions of value functions and Bellman operators in Robust MDPs. The \emph{value function} $\bm{v}^{\bm{\pi}, \bm{p}} \in \Real^S$ is defined as for each $\bm{\pi}\in \Pi$ and $\bm{p}\in \mathcal{P}$ as~\cite{Wiesemann2013, Ho2022}:
\begin{equation} \label{eq:value-policy-robust}
v^{\bm{\pi},\bm{p}}_s
  \; :=\; 
  \E_{\bm{\pi}, \bm{p}}^s
  \left[ \sum_{t=0}^{\infty} \gamma^t r(\tilde{s}_t, \tilde{a}_t)\right],
  \quad
  \forall s\in \mathcal{S}.
\end{equation}
The superscripts and subscripts of $\E^s_{\bm{\pi}, \bm{p}}$ indicate that the probability measure is chosen such that $\tilde{s}_0 = s$ and that $\tilde{a}_t \sim \bm{\pi}(\tilde{s}_t)$, and $\tilde{s}_{t+1} \sim \bm{p}(\tilde{s}_t, \tilde{a}_t)$ for all $t\in \Nats$.  The \emph{optimal robust value function} $\bm{v\opt} \in \Real^S$ is defined as the saddle point over the policies:
\begin{equation} \label{eq:value-optimal-robust}
  v\opt_s
  \; :=\;
  \max_{\bm{\pi}\in \Pi } \min_{\bm{p} \in \mathcal{P} } v^{\bm{\pi}, \bm{p} }_s, \qquad \forall s\in \mathcal{S}.
\end{equation}

Next, we describe the \emph{Bellman operator} for RMDPs. For each $\bm{\pi}\in \Pi$ and $\bm{p}\in \mathcal{P}$, we define the reward vector $\bm{r}^{\bm{\pi}}\in \Real^S$ and a transition matrix $\bm{P}^{\bm{\pi},\bm{p}} \in \Real_{+}^{S \times  S}$ as
 \begin{align*}
   P^{\bm{\pi},\bm{p}}_{s,s'} := \sum_{a \in \mathcal{A}} \pi_a(s)\cdot   p_{s'}(s,a),
   \qquad
     r^{\bm{\pi}}_s := \sum_{a \in \mathcal{A} } \pi_a(s)  \cdot r(s,a).
 \end{align*}
Then, the \emph{Bellman evaluation operator} $\mathfrak{T}^{\bm{\pi}, \bm{p} } \colon \Real^S \to  \Real^S$ is defined for each $\bm{v}\in \Real^n$ and $s\in \mathcal{S}$ as
\begin{align} \label{eq:Bellman-Game-robust}
\mathfrak{T}^{\bm{\pi}, \bm{p}}_s \bm{v}
\; :=\; 
r_s^{\bm{\pi}} + \gamma \cdot \smashoperator[r]{\sum_{a \in \mathcal{A}}}   \pi_a(s) \cdot  \bm{p}(s,a)\tr \bm{v} , \quad \forall s \in \mathcal{S} ~.
\end{align}
The \emph{Bellman equilibrium operator} $\mathfrak{T}\opt \colon \Real^S \to \Real^S$ is defined as $\mathfrak{T}\opt_s \bm{v} := \max_{\bm{\pi} \in \Pi} \min_{p \in \mathcal{P}} \mathfrak{T}^{\bm{\pi}, \bm{p}}_s \bm{v}$. The \emph{Bellman policy operator} $\mathfrak{B}\opt\colon \Real^S \to  2^{\Pi \times \mathcal{P}}$ computes the saddle point policies and is defined as
\begin{equation} \label{eq:bellman-policy-robust} 
\mathfrak{B}\opt \bm{v}
  \; :=\; 
  \saddle( (\bm{\pi}, \bm{p}) \mapsto \mathfrak{T}^{\bm{\pi}, \bm{p}} \bm{v}),
\end{equation}
where the partial order on the value functions is defined as $\bm{u} \le \bm{v} \Leftrightarrow u_s \le v_s, \forall s\in \mathcal{S}$.

Bellman operators can be used to efficiently evaluate the value function. The value functions defined in~\eqref{eq:value-policy} and~\eqref{eq:value-optimal} are the \emph{unique} solution for each $\bm{\pi}\in \Pi$ and $\bm{p} \in \mathcal{P} $ to~\cite[corollary~10.1]{Kallenberg2022} 
\begin{equation*}
\bm{v}^{\bm{\pi}, \bm{p}} = \mathfrak{T} \bm{v}^{\bm{\pi}, \bm{p}},
  \qquad
  \bm{v\opt} = \mathfrak{T} \bm{v\opt}.
\end{equation*}
The Bellman operators $\mathfrak{T}^{\bm{\pi}, \bm{p}}$ and $\mathfrak{T}\opt$ are monotone and $\gamma$-contractive in the $L_{\infty}$ norm~\cite[theorem~10.5]{Kallenberg2022}. They can be computed by solving a linear program~\cite[appendix~C]{Ho2022} or by a specialized algorithm~\cite[section~6]{Ho2022}.

\section{Additional Material for Section~\ref{sec:filar-tolw-suboptimal}}
\label{sec:addit-counterexample}

\subsection{Polatzchek Avi-Itzhak Algorithm} \label{sec:pai}

Polatzchek Avi-Itzhak~(PAI) algorithm is a popular algorithm for solving MGs summarized in \cref{alg:PAI}. PAI works by starting with some arbitrary estimate of the value function for each state, and then alternating steps which evaluate the current policies of both agents, and then improve them based on their evaluations. PAI has been found to be very effective numerically, when it converges. However, PAI may never reach a sufficiently optimal solution, causing it to run forever~\cite{VanderWal1978, Condon1993}. Because it  difficult to know when the PAI converges it is unreliable in practice.

\begin{algorithm}
    \caption{Polatzchek Avi-Itzhak (PAI) Algorithm}
    \label{alg:PAI}
    \KwIn{Initial value $\bm{v}^0$, tolerance $\epsilon > 0$}
    \KwOut{$(\bm{\pi}, \bm{\sigma}) \in \saddle_{\epsilon}(\rhog)$}
    $k \gets 0$\;
    % ~~ $\bm{\pi}^k \gets \bm{\pi}_0$; ~~$\bm{\sigma}^k \gets \bm{\sigma}_0$; ~~ $\bm{v}^k \gets \bm{v}_0$\;
    \Repeat{$\frac{2 \gamma}{1-\gamma}\psi_\infty(\bm{v}^k) \le \epsilon$}
    {
        $k \gets k + 1$\;
        Select $(\bm{\pi}^k,\bm{\sigma}^k) \in \mathfrak{B}\opt \bm{v}^{k-1}$\;
        $\bm{v}^k \gets (\bm{I} - \gamma\bm{P}^{\bm{\pi}^k,\bm{\sigma}^k})^{-1}\bm{r}^{\bm{\pi}^k,\bm{\sigma}^k}$\;
    }
    \KwRet $(\bm{v}^k, \bm{\pi}^k,\bm{\sigma}^k)$\;
\end{algorithm}

\subsection{Proof of \cref{prop:rmdp-bound}}

\begin{proof}[Proof of \cref{prop:rmdp-bound}]
The result follows by algebraic manipulation as 
\begin{align*}
\min_{\bm{p}\in \mathcal{P}}  \rhor(\bm{\hat{\pi}}, \bm{p})
&\ge \rhor(\bm{\hat{\pi}}, \bm{\hat{p}}) - \epsilon  \\
&\ge \rhor(\bm{\pi\opt }, \bm{\hat{p}}) - 2\cdot \epsilon\\
&\ge \min_{\bm{p}\in \mathcal{P}} \rhor(\bm{\pi\opt }, \bm{p}) - 2\cdot \epsilon,
\end{align*}
where the first two inequalities follow from the saddle point definition in~\eqref{eq:saddle-point} and the last inequality from $\bm{\hat{p}} \in \mathcal{P}$.
\end{proof}

\subsection{Proof of \cref{thm:counter-example}}

\begin{proof}[Proof of \cref{thm:counter-example}]
Assume the MG in \cref{exm:local-minimum} and let $\bm{v}^0 = \bm{0} \in  \Real^3$. Basic algebraic manipulations show that
\[
  \mathfrak{T}\opt \bm{v}^0
  =
    \begin{bmatrix}
      -\sqrt{2} / 2\\ -1 / 2 \\ 1 / 2.
    \end{bmatrix}
\]
Since $v_2^0 = v_3^0 = 0$, each $\bm{\sigma}^1(s_1) \in \Delta^2$ is optimal in $\mathfrak{B}\opt(\bm{v}^0)$:
\begin{align}
     \bm{\sigma}^{1}(s_1) \in \argmin_{\bm{\sigma} \in \Delta^2} \;\; \bm{\sigma} \tr\begin{bmatrix} v_3 \\ v_2\end{bmatrix} = \Delta^2.
    \label{x_s1}
\end{align}
Consider the policy such that $\bm{\sigma}^1(s_1) = [1,0]$. Then, \cref{alg:FilarTolwinski} computes the step direction as:
\begin{align*}
  \bm{d}^1 &= (\bm{I}-\gamma\bm{P}^{\bm{\pi}^{1},\bm{\sigma}^{1}})^{-1}\bm{r}^{\bm{\pi}^{1},\bm{\sigma}^{1}} - \bm{v}^0 \\
           &= (\bm{I}-\gamma\bm{P}^{\bm{\pi}^{1},\bm{\sigma}^{1}})^{-1}(\mathfrak{T}\opt\bm{v}^0-\bm{v}^0).
\end{align*}
 Then, the algorithm seeks to compute $\bm{v}^1$ for some $\alpha = \beta^{i_k} \in (0,1]$
\begin{equation*}
  \bm{v}^1
  = 
  \bm{v}^0+\alpha \bm{d}^1
  =
    \begin{bmatrix}
    \alpha(\frac{3}{4}-\frac{\sqrt{2}}{2})\\
        -\alpha\frac{5}{4} \\
        \alpha\frac{5}{4}
    \end{bmatrix}
\end{equation*}
Since $v^1_2 < 0$ and $v^1_3 > 0$, the optimal policy satisfies
\begin{align*}
  \argmin_{\bm{\sigma} \in \Delta^2} \;\; \bm{\sigma}\tr\begin{bmatrix} v_3^0+\alpha d_3^1 \\ v_2^0+\alpha d_2^1
  \end{bmatrix}
    = \{[0,1]\},
\end{align*}
and, therefore,
\begin{align*}
    \psi_2(\bm{v}^0+\alpha \bm{d}^1)^2 - \psi_2(\bm{v}^0)^2 &= \left( \frac{3\sqrt{2}-4}{2} \right) \alpha + \left( \frac{13 - 6\sqrt{2}}{4} \right) \alpha^2,
\end{align*} 
which is always positive for $\alpha \in (0,1]$. Therefore, there is no descending step size along $\bm{d}^1$. \cref{alg:FilarTolwinski} then fails to find $i_k$ in Line~\ref{ln:ft-backtrack} and makes no progress even in the first iteration.
\end{proof}

\begin{remark} \label{exm:local-minimum2}  
To satisfy the starting constraint given from \cite[theorem~3.3]{Filar1991} that $\bm{v}^0 = r_{\max}\cdot \bm{1}$ where $r_{\max} = \max_{a\in \mathcal{A},b \in \mathcal{B},s'\in \mathcal{S}} |r(a,b,s')|$, consider a MG with the transition probabilities and rewards defined in \cref{tbl:counter-game2} and the discount factor $\gamma = 0.8$. 

% From \cref{exm:local-minimum2}
Then:
\begin{align*}
  \psi_2(\bm{v}^0+\alpha \bm{d}^1)^2 - \psi_2(\bm{v}^0)^2 = \frac{76}{25}\alpha + \frac{302}{25}\alpha^2
\end{align*}
which is always positive for $\alpha \in (0,1]$, thus using the initial value from \cite[theorem~3.3]{Filar1991} does not fix FT's convergence issues.
\end{remark}

\begin{figure}
  \centering
\begin{tabular}{|c|c|}
\hline
  \multicolumn{2}{|c|}{$s_1$} \\
  \hline
  $b_1$ & $b_2$ \\
  \hline
  $-1/2$ & $-1/2$ \\
$[0,0,1]$ & $[0,1,0]$ \\
\hline
\end{tabular}
\hspace{2cm}
\begin{tabular}{|c|}
\hline
  $s_2$ \\
  \hline
  $b_1$ \\
\hline
    $-1/2$ \\ 
 $[0,1,0]$ \\
\hline
\end{tabular}
\hspace{2cm}
\begin{tabular}{|c|}
\hline
  $s_3$ \\
  \hline
  $b_1$ \\
\hline
    $1/2$ \\ 
 $[0,0,1]$ \\
\hline
\end{tabular}
\hfill
\caption{Rewards and transition probabilities of the Markov game for states $s_1, s_2, s_3$ from \cref{exm:local-minimum2}.} \label{tbl:counter-game2}
\end{figure}

\section{Additional Material for Section~\ref{sec:rcpi:-resid-corr}}
\subsection{Proof of \cref{lem:GameBackupRuntime}}

\begin{proof}[Proof of \cref{lem:GameBackupRuntime}]
The matrix game for each state $s \in \mathcal{S}$ with a value function $\bm{v} \in \Real^{\mathcal{S}}$ is $\bm{G}^{\bm{v},s} \in \Real^{A \times B}$: 
\begin{align} \label{eq:MatrixGameGeneration}
  \bm{G}^{\bm{v},s}_{i,j}
  =
  \sum_{s'\in \mathcal{S}} p(s,i,j,s')\cdot \left( r(s,i,j,s') + \gamma \cdot \bm{v}(s') \right), \quad \forall i\in \mathcal{A}, j\in \mathcal{B}~.
\end{align}
Generating the matrix $\bm{G}^{\bm{v},s}$ can be done in $O(SAB)$ operations. To compute $\mathcal{B}\opt \bm{v}$, one needs to solve the saddle point problem
\begin{align*}
\max_{\bm{d} \in \Delta^\mathcal{A}} \min_{\bm{e} \in \Delta^\mathcal{B}} \bm{d}\tr \bm{G}^{\bm{v},s} \bm{e},
\end{align*}
which can be solved as the following linear program~\cite[theorem~10.1]{Kallenberg2022}:
\begin{equation*}
\min_{\bm{d'} \in \Real^A} \; \bm{1}\tr\bm{d'} \label{eq:MatrixGameLP}
\quad
\operatorname{subject\,to} \quad \bm{\tilde{G}}^{\bm{v},s}\bm{d'} \geq \bm{1} \nonumber ,
\bm{d'} \ge \bm{0}
\end{equation*}
where
\begin{align*}
    \tilde{\bm{G}}^{\bm{v},s}_{ij} =  \bm{G}^{\bm{v},s}_{ij} + 1 - \min_{a \in \mathcal{A}, b \in \mathcal{B}} \bm{G}^{\bm{v},s}_{ab}.
\end{align*}
This linear program contains $A + B$ constraints and $A$ variables, requiring $O\left((A + B)^{1.5}A^2\log(\delta^{-1}) \right)$ arithmetic operations to solve to a precision of $\delta$~\cite[section~10.1]{InteriorPointMethods}. Therefore, the total cost of computing $\mathfrak{B}^\delta \bm{v}$ and $\mathfrak{T}^\delta \bm{v}$ is the same as generating $\bm{G}^{\bm{v},s}$ and solving the linear program~\eqref{eq:MatrixGameLP} for every state, producing the bound in~\eqref{eq:MatrixGameRuntime}.
\end{proof}

\subsection{Proof of \cref{lem:IterationContraction}}

We first prove the following lemma that characterizes the convergence rate of value iteration with an approximate Bellman operator. 
\begin{lemma} \label{lem:ModifiedContraction}
Let $\{\bm{v}^k\}_{k=0}^\infty$ be a sequence where $\bm{v}^0 \in \Real^n$ and $\bm{v}^{k+1} = \mathfrak{T}^\delta\bm{v}^k$, then
\begin{align}
    \label{eq:ModifiedContraction}
  \psi_\infty^\delta(\bm{v}^{k+1})
  &\le \gamma\cdot \psi_\infty^\delta(\bm{v}^k) + 2(1 + \gamma)\delta, \\
    \label{eq:ContractiveInduction}
  &\le \gamma^{k+1} \cdot \psi_{\infty}^{\delta}(\bm{v}^0) + \frac{2 (1+\gamma )}{1-\gamma} \delta . 
\end{align}
\end{lemma}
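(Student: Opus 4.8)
The plan is to prove \cref{lem:ModifiedContraction}, which characterizes the convergence rate of value iteration under the approximate Bellman operator $\mathfrak{T}^\delta$. I would establish the one-step bound~\eqref{eq:ModifiedContraction} first and then obtain~\eqref{eq:ContractiveInduction} by a routine telescoping induction.

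For the one-step bound, the key difficulty is that $\mathfrak{T}^\delta$ is not itself a contraction; only the exact operator $\mathfrak{T}\opt$ is $\gamma$-contractive in $L_\infty$. The strategy is therefore to expand $\psi_\infty^\delta(\bm{v}^{k+1}) = \|\mathfrak{T}^\delta \bm{v}^{k+1} - \bm{v}^{k+1}\|_\infty$ and insert copies of $\mathfrak{T}\opt$ applied to both $\bm{v}^{k+1}$ and $\bm{v}^k$, using the defining inequality~\eqref{eq:bellman-delta} that $\|\mathfrak{T}^\delta \bm{v} - \mathfrak{T}\opt \bm{v}\|_\infty \le \delta$ to pay a cost of $\delta$ each time I swap $\mathfrak{T}^\delta$ for $\mathfrak{T}\opt$. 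Concretely, since $\bm{v}^{k+1} = \mathfrak{T}^\delta \bm{v}^k$, I would write
\[
\psi_\infty^\delta(\bm{v}^{k+1})
= \|\mathfrak{T}^\delta \bm{v}^{k+1} - \mathfrak{T}^\delta \bm{v}^k\|_\infty
\le \|\mathfrak{T}^\delta \bm{v}^{k+1} - \mathfrak{T}\opt \bm{v}^{k+1}\|_\infty
+ \|\mathfrak{T}\opt \bm{v}^{k+1} - \mathfrak{T}\opt \bm{v}^k\|_\infty
+ \|\mathfrak{T}\opt \bm{v}^k - \mathfrak{T}^\delta \bm{v}^k\|_\infty.
\]
The first and third terms are each bounded by $\delta$ via~\eqref{eq:bellman-delta}, and the middle term is bounded by $\gamma \|\bm{v}^{k+1} - \bm{v}^k\|_\infty$ by the $\gamma$-contractivity of $\mathfrak{T}\opt$. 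It then remains to control $\|\bm{v}^{k+1}-\bm{v}^k\|_\infty = \|\mathfrak{T}^\delta \bm{v}^k - \bm{v}^k\|_\infty = \psi_\infty^\delta(\bm{v}^k)$, which is exactly the previous residual. Assembling these gives $\psi_\infty^\delta(\bm{v}^{k+1}) \le \gamma \psi_\infty^\delta(\bm{v}^k) + 2\delta$; to reach the stated $2(1+\gamma)\delta$ I would route the contraction estimate through the exact residual $\|\mathfrak{T}\opt \bm{v}^k - \bm{v}^k\|_\infty$ rather than the approximate one, incurring one extra $\gamma\delta$ when converting between the two residual notions. I expect the bookkeeping of precisely which residual ($\psi_\infty$ versus $\psi_\infty^\delta$) appears where to be the main obstacle, since the slack accumulates as $2(1+\gamma)\delta$ rather than the naive $2\delta$, so I must be careful to apply~\eqref{eq:bellman-delta} at the right points and pick up the additional factor of $\gamma$.

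For the second inequality~\eqref{eq:ContractiveInduction}, I would unroll~\eqref{eq:ModifiedContraction} by induction: applying the one-step bound $k+1$ times yields $\psi_\infty^\delta(\bm{v}^{k+1}) \le \gamma^{k+1} \psi_\infty^\delta(\bm{v}^0) + 2(1+\gamma)\delta \sum_{j=0}^{k} \gamma^j$, and bounding the finite geometric sum by its infinite counterpart $\sum_{j=0}^\infty \gamma^j = 1/(1-\gamma)$ gives the additive constant $\frac{2(1+\gamma)}{1-\gamma}\delta$. This step is entirely mechanical once the one-step contraction is in hand.
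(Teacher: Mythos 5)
Your argument is correct, and for the one-step bound it is cleaner than the paper's. You exploit the identity $\bm{v}^{k+1}=\mathfrak{T}^\delta\bm{v}^k$ to rewrite $\psi_\infty^\delta(\bm{v}^{k+1})=\|\mathfrak{T}^\delta\bm{v}^{k+1}-\mathfrak{T}^\delta\bm{v}^k\|_\infty$, insert $\mathfrak{T}\opt\bm{v}^{k+1}$ and $\mathfrak{T}\opt\bm{v}^k$, and pay $\delta$ twice via~\eqref{eq:bellman-delta} plus a $\gamma$-contraction on $\|\bm{v}^{k+1}-\bm{v}^k\|_\infty=\psi_\infty^\delta(\bm{v}^k)$; this yields $\psi_\infty^\delta(\bm{v}^{k+1})\le\gamma\,\psi_\infty^\delta(\bm{v}^k)+2\delta$, which is \emph{strictly tighter} than~\eqref{eq:ModifiedContraction} and therefore implies it --- your closing worry about needing to ``reach'' the constant $2(1+\gamma)\delta$ by rerouting through the exact residual is unnecessary, since the lemma only asserts an upper bound. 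The paper instead defines the error vectors $\bm{d}^k:=\mathfrak{T}^\delta\bm{v}^k-\mathfrak{T}\opt\bm{v}^k$, expands $\bm{v}^{k+1}=\mathfrak{T}\opt\bm{v}^k+\bm{d}^k$, pushes $\bm{d}^k$ through the operator (picking up a $\gamma\bm{P}\bm{d}^k$ term), contracts on the \emph{exact} residual $\|\mathfrak{T}\opt\bm{v}^k-\bm{v}^k\|_\infty$, and then converts back to $\psi_\infty^\delta$; each of these conversions costs an extra multiple of $\delta$, which is where the looser $2(1+\gamma)\delta$ comes from. What your route buys is a smaller accumulated error and a shorter chain of inequalities; what the paper's route buys is nothing essential here, and its intermediate step treating $\mathfrak{T}\opt(\mathfrak{T}\opt\bm{v}^k+\bm{d}^k)$ as exactly $\mathfrak{T}\opt\mathfrak{T}\opt\bm{v}^k+\gamma\bm{P}\bm{d}^k$ is in fact looser to justify than anything you use. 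Your induction for~\eqref{eq:ContractiveInduction} --- unrolling and bounding the geometric sum by $1/(1-\gamma)$ --- matches the paper's.
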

\begin{proof}
  Let $\bm{d}^k := \mathfrak{T}^{\delta} \bm{v}^k - \mathfrak{T}\opt \bm{v}^k$ represent the approximation errors in $\mathfrak{T}^\delta\bm{v}^k$ then
    \begin{align*}
        \psi_\infty^\delta(\bm{v}^{k+1}) &\stackrel{\text{(a)}}{=} \|\mathfrak{T}\opt(\mathfrak{T}\opt\bm{v}^k + \bm{d}^k) + \bm{d}^{k+1} - (\mathfrak{T}\opt\bm{v}^k + \bm{d}^k)\|_\infty \\
        &\stackrel{\text{(b)}}{=} \|\mathfrak{T}\opt\mathfrak{T}\opt\bm{v}^k + \gamma\bm{P}\bm{d}^k + \bm{d}^{k+1} - \mathfrak{T}\opt\bm{v}^k - \bm{d}^k\|_\infty\\
        &\stackrel{\text{(c)}}{\leq} \gamma\|\mathfrak{T}\opt\bm{v}^k - \bm{v}^k\|_\infty + \|\bm{d}^{k+1}\|_\infty + (1 + \gamma)\|\bm{d}^{k}\|_\infty\\
        &\stackrel{\text{(d)}}{\leq} \gamma\|\mathfrak{T}^\delta\bm{v}^k - \bm{v}^k\|_\infty + \|\bm{d}^{k+1}\|_\infty + (1 + 2\gamma)\|\bm{d}^{k}\|_\infty\\
        &\stackrel{\text{(e)}}{\leq} \gamma\psi_\infty^\delta(\bm{v}^k) + 2(1+\gamma)\delta.
    \end{align*}
    The equality (a) follows from the definition of $\psi_\infty^\delta$,  (b) follows from the definition of $\mathfrak{T}\opt$ where $\bm{P}$ is a stochastic matrix. The inequality (c) comes from the triangle inequality, and (d) comes from the definition of $\mathfrak{T}^\delta$ combined with the triangle inequality. Finally, (e) follows from the definition of $\psi_\infty^\delta$ and the fact that $\delta \geq \|\bm{d}^k\|_\infty \;\;\forall k \in \Nats$.

     To show~\eqref{eq:ContractiveInduction} consider the tighter bound
     \begin{equation}
        \psi^\delta_\infty(\bm{v}^k) \leq \gamma^k\psi^\delta_\infty(\bm{v}^0) + \frac{2(1+\gamma)\delta(1-\gamma^k)}{1-\gamma}.
        \label{eq:ContractiveInductionExact}
     \end{equation}
    ~\eqref{eq:ContractiveInductionExact} is true for $k = 0$ since
    \[
        \gamma^0\psi^\delta_\infty(\bm{v}^0) + \frac{2(1+\gamma)\delta(1-\gamma^0)}{1-\gamma} = \psi^\delta_\infty(\bm{v}^0), 
    \]
    and if~\eqref{eq:ContractiveInductionExact} is true for $k$, then
    \begin{align*}
        \psi^\delta_\infty(\bm{v}^{k+1}) &\stackrel{\text{(f)}}{\leq} \gamma\left(\psi_\infty^\delta(\bm{v}^k)\right) + 2(1+\gamma)\delta\\
         &\stackrel{\text{(g)}}{\leq} \gamma\left(\gamma^k\psi^\delta_\infty(\bm{v}^0) + \frac{2(1+\gamma)\delta(1-\gamma^k)}{1-\gamma}\right) + 2(1+\gamma)\delta\\
         &\stackrel{\text{(i)}}{\leq} \gamma^{k+1}\psi^\delta_\infty(\bm{v}^0) + \left(\frac{2(1+\gamma)\delta(1-\gamma^{k+1})}{1-\gamma}\right).
    \end{align*}
    Inequality (f) follows from~\eqref{eq:ModifiedContraction}, (g) follows from~\eqref{eq:ContractiveInductionExact} being true for $k$, and (i) is a result of algebraic manipulation. Therefore~\eqref{eq:ContractiveInductionExact} is true for $k \in \Nats$ and since
    \[
        \gamma^{k}\psi^\delta_\infty(\bm{v}^0) + \left(\frac{2(1+\gamma)\delta(1-\gamma^{k})}{1-\gamma}\right) \leq \gamma^{k}\psi^\delta_\infty(\bm{v}^0) + \left(\frac{2(1+\gamma)\delta}{1-\gamma}\right) \;\;\forall k \in \Nats,
    \]
   ~\eqref{eq:ContractiveInduction} is also true for $k \in \Nats$.
\end{proof}

\begin{proof}[Proof of \cref{lem:IterationContraction}]
Recall that Line~\ref{ln:PolicySelection} requires $T_{\mathrm{G}}$ operations. Generating and solving the $S \times S$ system of linear equations on Line~\ref{ln:PolicyEvaluation} requires $O(S^2AB + S^3)$ operations, and the check on Line~\ref{ln:InitialResidualCheck} requires $T_{\mathrm{G}}$ operations since computing $\mathfrak{T}^\delta \bm{v}$ is the most expensive operation in $\psi^\delta_\infty(\bm{v})$.
   
The additional work done by \cref{alg:RCPI} on Line~\ref{ln:InitialResidualCheck} can be split into two cases.
    
\emph{Case 1:}
If $\gamma^{m-1}\psi_\infty^{\delta}(\bm{u}^{k,0}) + \frac{2(1 + \gamma)\delta}{1-\gamma} >  \;\psi_\infty^{\delta}(\bm{v}^{k-1})$, \cref{alg:RCPI} uses a Bellman backup to update the value function estimate by assigning $\mathfrak{T}^\delta\bm{v}^{k-1}$ to $\bm{v}^{k}$. Note that this step typically requires no significant additional work since $\mathfrak{T}^\delta \bm{v}$ can usually be computed from the results $\mathfrak{B}^\delta$ on Line~\ref{ln:PolicySelection}. Moreover, from \cref{lem:ModifiedContraction}, $\bm{v}^k$ satisfies~\eqref{eq:BellmanContraction}.

\emph{Case 2:}
If $\gamma^{m-1}\psi_\infty^{\delta}(\bm{u}^{k,0}) + \frac{2(1 + \gamma)\delta}{1-\gamma} \leq  \;\psi_\infty^{\delta}(\bm{v}^{k-1})$, we show that \cref{alg:RCPI} updates $\bm{v}^k$ to $\bm{u}^{k,l}$ for $l \le m$. In particular, we have that 
\begin{equation*} 
  \psi_{\infty}^{\delta}(\bm{u}^{k,m})
  \le \gamma^m \cdot \psi_{\infty}^{\delta}(\bm{u}^{k,0}) + \gamma \cdot \frac{2(1+\gamma)\delta}{1-\gamma}
  \le \gamma \cdot  \psi_\infty^{\delta}(\bm{v}^{k-1}).
\end{equation*}
The first inequality follows from \cref{lem:ModifiedContraction} and the last inequality follows from the  hypothesis of Case 2. Therefore, the loop on Line~\ref{ln:FixingCheck} terminates in at most $m$ iterations, runs in $O(m \cdot  T_{\mathrm{G}})$ operations, and $\bm{u}^{k,l}$ satisfies the termination condition $\psi_{\infty}^{\delta}(\bm{u}^{k,l}) \le  \gamma \cdot  \psi_{\infty}^{\delta}(\bm{v}^{k-1})$.
\end{proof}

\subsection{Proof of \cref{thm:RCPIRuntime}}

\begin{proof}[Proof of \cref{thm:RCPIRuntime}]
Recall from \cref{lem:ModifiedContraction,lem:IterationContraction}  it follows that
\begin{align}
    \psi^\delta_\infty(\bm{v}^k) &\leq \gamma^k\psi^\delta_\infty(\bm{v}^0) + \frac{2(1+\gamma)\delta}{1-\gamma} ~,
    \label{eq:ResidualDecreasingFunc}
\end{align}
and if $\bm{v}^0 = \bm{0}$ then
\begin{align}
    \psi^\delta_\infty(\bm{v}^0) \leq r_{\max} + \delta~.
    \label{eq:InitialValueErrorBound}
\end{align}
Let
\begin{equation} \label{eq:k-definition}
    k :=
    \left\lceil
    \frac{\log\left(\frac{1-\gamma}{2\gamma}\epsilon  -
        \frac{3+\gamma}{1-\gamma}\delta\right) -
        \log(r_{\max}+ \delta)}{\log(\gamma)} 
    \right\rceil,
\end{equation}
then
\begin{align*}
    \frac{2\gamma}{1-\gamma}(\psi_\infty^\delta(\bm{v}^k) + \delta)
    &\stackrel{\text{\eqref{eq:ResidualDecreasingFunc}}}{\le} \frac{2\gamma}{1-\gamma} \left( \gamma^k\psi_\infty^\delta(\bm{v}^0) + \frac{2(1+\gamma)\delta}{1-\gamma} + \delta \right)\\
    &\stackrel{\text{\eqref{eq:InitialValueErrorBound}}}{\le} \frac{2\gamma^{k+1}(r_{\max} + \delta)}{1-\gamma} +\frac{2\gamma(3+\gamma)\delta}{(1-\gamma)^2}\\
    &\stackrel{\text{\eqref{eq:k-definition}}}{\le} \epsilon
\end{align*}
The first equality follows from , the second inequality follows from , and the third inequality follows from algebraic manipulation. This guarantees that \cref{alg:RCPI} will terminate at or before iteration $k$, and when combined with the iteration runtime given in \cref{lem:IterationContraction}, produces the total runtime given in~\eqref{eq:RCPIRuntime}. The termination condition of \cref{alg:RCPI} guarantees the returned $(\bm{\pi},\bm{\sigma}) \in \saddle_\epsilon(\rho)$ as a direct result of \cref{prop:value-approximation-error}.
\end{proof}

\section{Additional Material for Section~\ref{sec:numerical-results}}

To create random Markov games for a given number of states, we randomly sample the number of actions available to each agent from $[1,2,3,5,10]$ for each state. The number of next states corresponding to each state-action-action combination is sampled as a fixed proportion $\eta$ of the total number of states. Following this method, we construct a series of games with varying state sizes and a common $\eta$ value of $\frac{1}{5}$.

Robust MDPs allowed the transition probabilities for each state to vary in absolute deviance by no more than 1. To generate instances of gambler's ruin, we use the maximum capital to control the number of states, and select the winning probability from a uniform distribution between 0 and 1. 

For random gridworld problems, the state size is controlled by specifying the side length, and the wind is randomly selected from a uniform distribution between 0 and 1. All the cells have a reward of 0 except one randomly picked cell with a reward of 1. 

To create random inventory problems for a given state count, a proportion of inventory states was randomly selected from a uniform distribution between 0 and 1. We assign the remaining states to the backlog, and the order capacity, storage cost, backlog cost, sale price, item cost, and delivery cost were all chosen randomly so that they do not conflict with one another or create bogus inventory problems (the sale price being less than the item cost, for example). The demand probabilities followed a Poisson distribution, where we sample the expected demand from a uniform distribution between 0 and the maximum demand.

\begin{figure*}
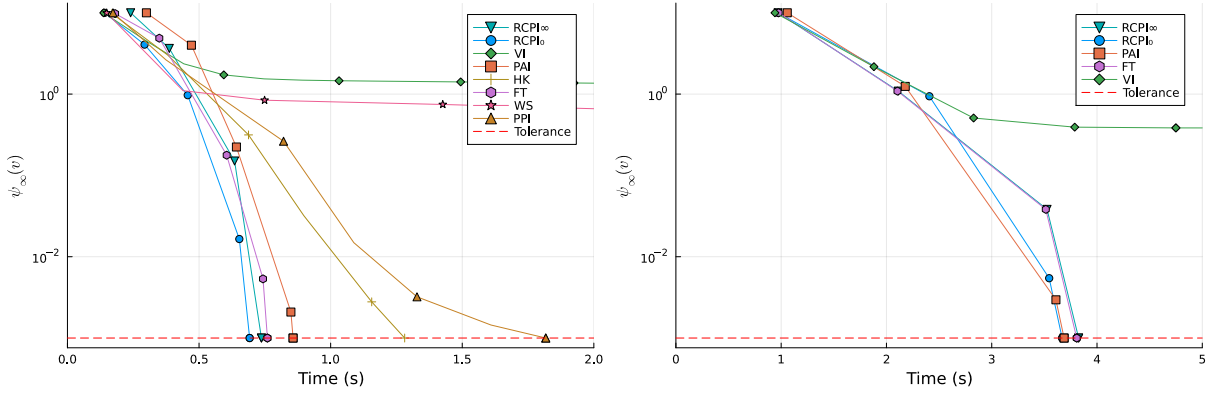

  \centering
    \includegraphics[width=.45\linewidth]{figures/mg\_all.pdf}
    \includegraphics[width=.45\linewidth]{figures/games\_large.pdf}
    \caption{The Bellman residual of each algorithm's value function plotted as a function of time for the smaller MGs \emph{(left)} with 20 to 100 states, and the larger MGs \emph{(right)} with 200 to 1000 states.}
    \label{fig:MarkovGames}
\end{figure*}

\begin{figure}
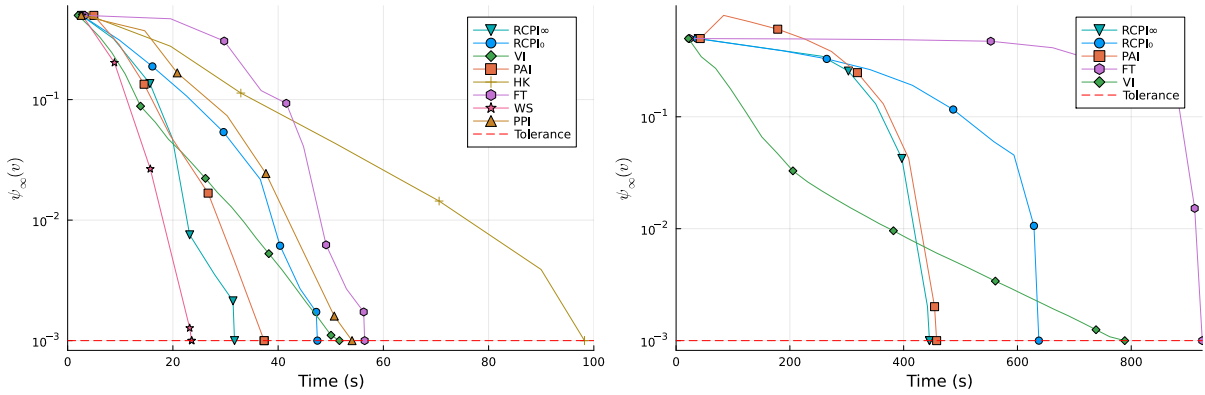

  \centering
        \includegraphics[width=.45\linewidth]{figures/ruin\_all.pdf}
        \includegraphics[width=.45\linewidth]{figures/ruin\_large.pdf}
    \caption{The Bellman residual of each algorithm's value function plotted as a function of time for the smaller gambler's ruin problems (\emph{left}) with 10 to 50 states, and the larger problems (\emph{right}) with 200 to 1000 states.}
    \label{fig:GamblersRuin}
\end{figure}

\begin{figure}
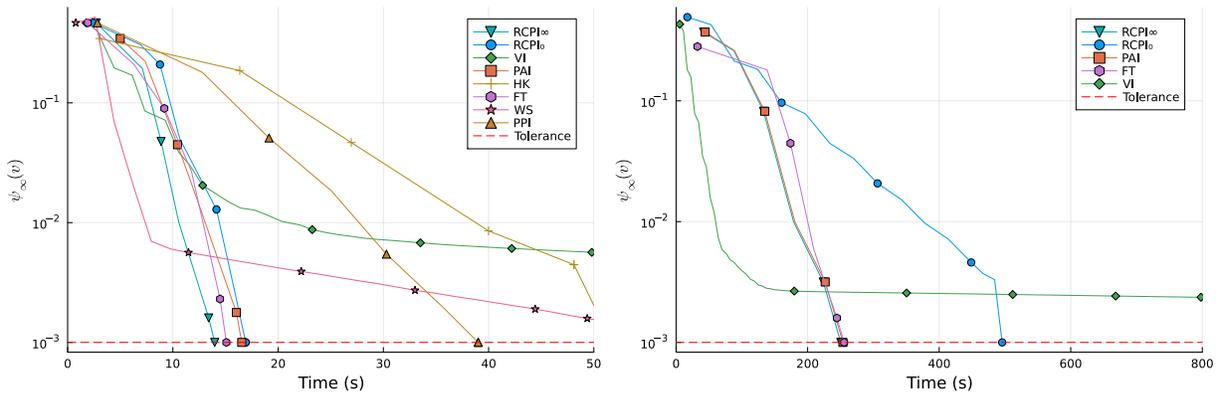

  \centering
        \includegraphics[width=.45\linewidth]{figures/grid\_all.pdf}
        \includegraphics[width=.45\linewidth]{figures/grid\_large.pdf}
    \caption{The Bellman residual of each algorithm's value function plotted as a function of time for the smaller gridworld problems (\emph{left}) with 4 to 100 states, and the larger problems (\emph{right}) with 16 to 400 states.}
    \label{fig:GridWorld}
\end{figure}

\begin{figure}
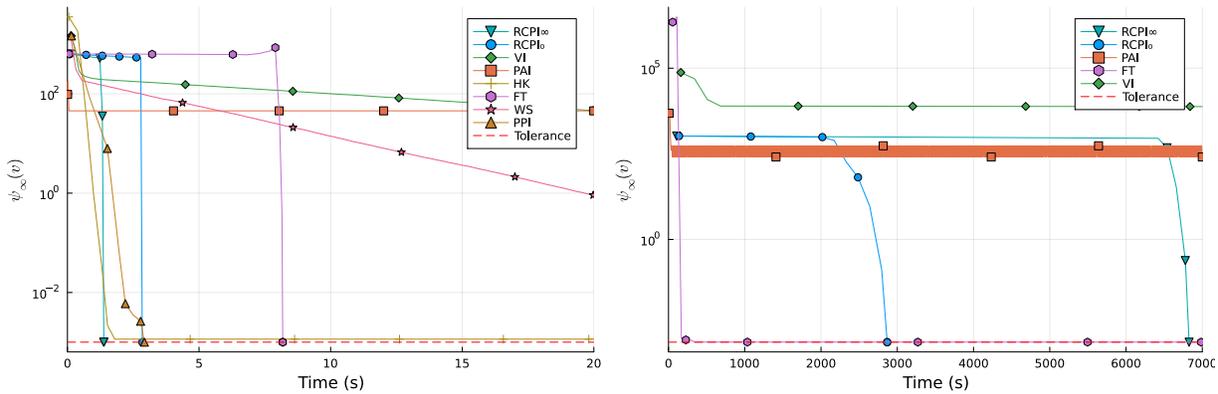

  \centering
        \includegraphics[width=.45\linewidth]{figures/inv\_all.pdf}
        \includegraphics[width=.45\linewidth]{figures/inv\_large.pdf}
    \caption{The Bellman residual of each algorithm's value function plotted as a function of time for the smaller inventory problems (\emph{left}) with 4 to 20 states, and the larger problems (\emph{right}) with 40 to 200 states.}
    \label{fig:Inventory}
\end{figure}

% \fi
\end{document}